    \newtheorem{theorem}{Theorem}
    \newtheorem{corollary}{Corollary}
\begin{document}
\title{Defense Against Advanced Persistent Threats in Dynamic Cloud Storage: A Colonel Blotto Game Approach}
\begin{spacing}{0.5}
\author{Minghui Min\IEEEauthorrefmark{1}, Liang Xiao\IEEEauthorrefmark{1}, Caixia Xie\IEEEauthorrefmark{1}, Mohammad Hajimirsadeghi\IEEEauthorrefmark{2}, Narayan B. Mandayam\IEEEauthorrefmark{2}\\
\IEEEauthorblockA{\IEEEauthorrefmark{1}Department of Communication Engineering, Xiamen University, Xiamen, China. Email: lxiao@xmu.edu.cn}\\
\IEEEauthorblockA{\IEEEauthorrefmark{2}WINLAB, Department of Electrical and Computer Engineering, Rutgers University, Piscataway, NJ. Email: \{mohammad, narayan\}@winlab.rutgers.edu}}
\end{spacing}
\maketitle
\begin{abstract}
Advanced Persistent Threat (APT) attackers apply multiple sophisticated methods to continuously and stealthily steal information from the targeted cloud storage systems and can even induce the storage system to apply a specific defense strategy and attack it accordingly. In this paper, the interactions between an APT attacker and a defender allocating their Central Processing Units (CPUs) over multiple storage devices in a cloud storage system are formulated as a Colonel Blotto game. The Nash equilibria (NEs) of the CPU allocation game are derived for both symmetric and asymmetric CPUs between the APT attacker and the defender to evaluate how the limited CPU resources, the date storage size and the number of storage devices impact the expected data protection level and the utility of the cloud storage system. A CPU allocation scheme based on ``hotbooting'' policy hill-climbing (PHC) that exploits the experiences in similar scenarios to initialize the quality values to accelerate the learning speed is proposed for the defender to achieve the optimal APT defense performance in the dynamic game without being aware of the APT attack model and the data storage model. A hotbooting deep Q-network (DQN)-based CPU allocation scheme further improves the APT detection performance for the case with a large number of CPUs and storage devices. Simulation results show that our proposed reinforcement learning based CPU allocation can improve both the data protection level and the utility of the cloud storage system compared with the Q-learning based CPU allocation against APTs.
\end{abstract}
\begin{IEEEkeywords}
Colonel Blotto game, advanced persistent threats, cloud security, CPU allocation, reinforcement learning, data protection level.
\end{IEEEkeywords}

\section{Introduction}
Cloud storage and cyber systems are vulnerable to Advanced Persistent Threats (APTs), in which an attacker applies multiple sophisticated methods such as the injection of multiple malwares to continuously and stealthily steal data from the targeted cloud storage system \cite{giura2013using},\cite{han2016game} and \cite{wang2011enabling}. APT attacks are difficult to detect and have caused privacy leakage and millions of dollars' loss \cite{vukalovic2015advanced} and \cite{C11}. According to \cite{pone15}, more than 65\% of the organizations in a survey in 2014 have experienced more APT attacks in their IT networks than last year.

The FlipIt game proposed in the seminal work \cite{van2013flipit} formulates the stealthy and continuous APT attacks and designs the scan interval to detect APTs on a given cyber system. The game theoretic study in \cite{zhang2014stealthy} has provided insights to design the optimal scan intervals of a cyber system against APTs. Prospect theory has been applied in \cite{Xiao2016cloudjournal} to investigate the probability distortion of an APT attacker against cloud storage and cumulative prospect theory has been used in \cite{InfocomCPT} to model the frame effect of an APT attacker to choose the attack interval. Most existing APT games ignore the strict resource constraints in the APT defense, such as the limited number of Central Processing Units (CPUs) of a storage defender and an APT attacker \cite{van2013flipit} and \cite{Manshaei2013}. However, a cloud storage system with limited number of CPUs cannot scan all the data stored on the storage devices in a given time slot. To this end, encryptions and authentication techniques are applied to protect data privacy for cloud storage systems. On the other hand, an APT attacker with limited CPU resources cannot install malwares to steal all the data on the cloud storage system in a single time slot either \cite{zhang2015game}.

It is challenging for a cloud storage system to optimize the CPU allocation to scan the storage devices under a large number of CPUs and storage devices without being aware of the APT attack strategy. Therefore, we use the Colonel Blotto game (CBG), a two-player zero-sum game with multiple battlefields to model the competition between an APT attacker and a storage defender, each with a limited total number of CPUs over a given number of storage devices. The player who applies more resources on a battlefield in a Colonel Blotto game wins it, and the overall payoff of a player in the game is proportional to the number of the winning battlefields \cite{roberson2006colonel}. The Colonel Blotto game has been recently applied to design the spectrum allocation of network service providers \cite{hajimirsadeghi2016inter}, the jamming resistance methods for Internet of Things \cite{labib2015colonel} and \cite{namvar2016jamming}.

Our previous work in \cite{CBGPHC} assumes that each storage device has the same amount of data and addresses APT attackers that does not change the attack policy. However, the storage devices usually have different amount of the data with different priority levels, and the data size and their priority level also change over time. By allocating more CPUs to scan the storage devices with more data, a storage defender can achieve a higher data protection level. Therefore, this work extends to a dynamic cloud storage system whose data size changes over time and addresses smart APTs, in which an attacker that can learn the defense strategy first chooses the attack strength to induce the storage system to apply a specific defense strategy and then attacks it accordingly.

By applying time sharing (or division), a defender can use a single CPU to scan multiple storage devices as battlefields to detect APTs in a time slot, and an attacker can use a single CPU to attack multiple devices with a single CPU yielding a roughly continuous CBG. According to \cite{roberson2006colonel}, a pure-strategy Colonel Blotto game rarely achieves Nash equilibria (NEs). Therefore, we focus on the CBG with mixed strategies, in which both players choose their CPU allocation distribution and introduce randomness in their action selection to fool their opponent. The conditions under which the NEs exist in the CPU allocation game are provided to disclose how the number of storage devices, the size of the data stored in each storage device and the total number of CPUs in which the defender observes the impact on the data protection level and the utility of the cloud storage system against APTs.

The CBG-based CPU allocation game provides a framework to understand the strategic behavior of both sides, and the NE strategy relies on the detailed prior knowledge about the APT attack model. In particular, the cloud defender has to know the total number of the attack CPUs and the attack policy over the storage devices, which is challenging to accurately estimate in a dynamic storage system. On the other hand, the repeated interactions between the APT attacker and the defender over multiple time slots can be formulated as a dynamic CPU allocation game, and the defender can choose the security strategy according to the attack history. The APT defense decisions in the dynamic CPU allocation game can be approximately formulated as a Markov decision process (MDP) with finite states, in which the defender observes the state that consists of the previous attack CPU allocation and the current data storage distribution. Therefore, a defender can apply reinforcement learning (RL) techniques such as Q-learning to achieve the optimal CPU allocation over the storage devices to detect APTs in a dynamic game.

The policy hill-climbing (PHC) algorithm as an extension of Q-learning in the mixed-strategy game \cite{bowling2001rational} enables an agent to achieve the optimal strategy without being aware of the underlying system model. For instance, the PHC-based CPU allocation scheme as proposed in our previous work in \cite{CBGPHC} enables the defender to protect the storage devices with limited number of CPUs without being aware of the APT attack model. In this work, a ``hotbooting'' technique as a combination of transfer learning \cite{pan2010survey} and RL exploits experiences in similar scenarios to accelerate the initial learning speed. We propose a ``hotbooting'' PHC-based CPU allocation scheme that chooses the number of the CPUs on each storage device based on the current state and the quality or Q-function that is initialized according to the APT detection experiences to reduce the exploration time at the initial learning stage.


We apply deep Q-network (DQN), a deep reinforcement learning technique recently developed by Google DeepMind in \cite{mnih2015human} and \cite{mnih2013playing} to accelerate the learning speed of the defender for the case with a large number of storage devices and defense CPUs. More specifically, the DQN-based CPU allocation exploits the deep convolutional neural network (CNN) to determine the Q-value for each CPU allocation and thus suppress the state space observed by the cloud storage defender. Simulation results demonstrate that this scheme can improve the data protection level, increase the APT attack cost, and enhance the utility of the cloud storage system against APTs.

The main contributions of this paper are summarized as follows:

(1) We formulate a CBG-based CPU allocation game against APTs with time-variant data size and attack policy. The NEs of the game are provided to disclose the impact of the number of storage devices, the amonut of data stored in each device and the total number of CPUs on the data protection level of the cloud storage system.

(2) A hotbooting PHC-based CPU allocation scheme is developed to achieve the optimal CPU allocation over the storage devices with low computational complexity and improve the data protection level compared with the PHC-based scheme as proposed in \cite{CBGPHC}.

(3) A hotbooting DQN-based CPU allocation scheme is proposed to furture accelerate the learning speed and improve the resistance against APTs.

The rest of this paper is organized as follows: We review the related work in  section \uppercase\expandafter{\romannumeral2}, and present the system model in section \uppercase\expandafter{\romannumeral3}. We formulate a CBG-based CPU allocation game and derive the NEs of the game in Section \uppercase\expandafter{\romannumeral4}. A hotbooting PHC-based CPU allocation scheme and a hotbooting DQN-based scheme are developed in Sections \uppercase\expandafter{\romannumeral5} and \uppercase\expandafter{\romannumeral6}, respectively. Simulation results are provided in Section \uppercase\expandafter{\romannumeral7} and the conclusions of this work are drawn in Section \uppercase\expandafter{\romannumeral8}.

\section{Related work}
The seminal work in \cite{van2013flipit} formulates a stealthy takeover game between an APT attacker and a defender, who compete to control a targeted cloud storage system. The APT scan interval on a single device has been optimized in \cite{bowers2012defending} based on the FlipIt model without considering the constraint of scanning CPUs. The game between an overt defender and a stealthy attacker as investigated in \cite{laszka2013mitigating} provides the best response of the periodic detection strategy against a non-adaptive attacker. The online learning algorithm as developed in\cite{zheng2016reset} achieves the optimal timing of the security updates in the FlipIt game, and reduces the regret of the upper confidence bound compared with the periodic defense strategy. The APT defense game formulated in \cite{zhang2015game} extends the FlipIt game in \cite{van2013flipit} to multi-node systems with limited resources. The game among an APT attacker, a cloud defender and a mobile device as formulated in \cite{pawlick2015flip} combines the APT defense game in \cite{van2013flipit} with the signaling game between the cloud and the mobile device.
The evolutionary game can capture the long term continuous behavior of APTs on cloud storage \cite{IEEEAcessAPT}. The information-trading and APT defense game formulated in \cite{hu2015dynamic} analyzes the joint APT and insider attacks. The subjective view of APT attackers under uncertainty scanning duration was analyzed in \cite{Xiao2016cloudjournal} based on prospect theory.

Colonel Blotto game models the competition between two players each with resource constraints. For example, the Colonel Blotto game with mixed-strategy as formulated in \cite{hajimirsadeghi2016inter} studies the spectrum allocation of network service providers, yielding a fictitious play based allocation approach to compute the equilibrium of the game with discrete spectrum resources. The anti-jamming communication game as developed in \cite{wu2012anti} optimizes the transmit power over multiple channels in cognitive radio networks based on the NE of the CBG. The CBG-based jamming game as formulated in \cite{labib2015colonel} shows that neither the defender nor the attacker can dominate with limited computational resources. The CBG-based jamming game as formulated in \cite{namvar2016jamming} shows how the number of subcarriers impacts the anti-jamming performance of Internet of Things with continuous and asymmetric radio power resources. The CBG-based phishing game as formulated in \cite{chia2011colonel} investigates the dynamics of the detect-and-takedown defense against phishing attacks.

Reinforcement learning techniques have been used to improve network security. For instance, the minimax-Q learning based spectrum allocation as developed in \cite{wang2011anti} increases the spectrum efficiency in cognitive radio networks. The DQN-based anti-jamming communication scheme as designed in \cite{HanDQN2017} applies DQN to choose the transmit channel and node mobility and can increase the signal-to-interference-plus-noise ratio of secondary users against cooperative jamming in cognitive radio networks. The PHC-based CPU allocation scheme as proposed in \cite{CBGPHC} applies PHC to improve the data protection level of the cloud storage system against APTs. Compared with our previous work in \cite{CBGPHC}, this work improves the game model by incorporating the time-variant data storage model. We also apply both the hotbooting technique and DQN to accelerate the learning speed and thus improve the security performance for the case with a large number of storage devices and CPUs against the smart APT attacks in the dynamic cloud storage system.

\section{System Model}
As illustrated in Fig. \ref{system1}, the cloud storage system consists of $D$ storage devices, where device $i$ stores data of size $B_i^{(k)}$ at time $k$, with $1 \leq i \leq D$. Let $\mathbf{B}^{(k)} =  \left[B_i^{(k)}\right]_{1\leq i \leq D}$ be the data size vector of the cloud storage system, and $\widehat{B}^{(k)} = \sum_{i=1}^D {B_i^{(k)}}$ denote the total amount of the data stored in the cloud storage system at time $k$..
\begin{figure}[!htbp]
\begin{center}
\includegraphics[height=1.2 in]{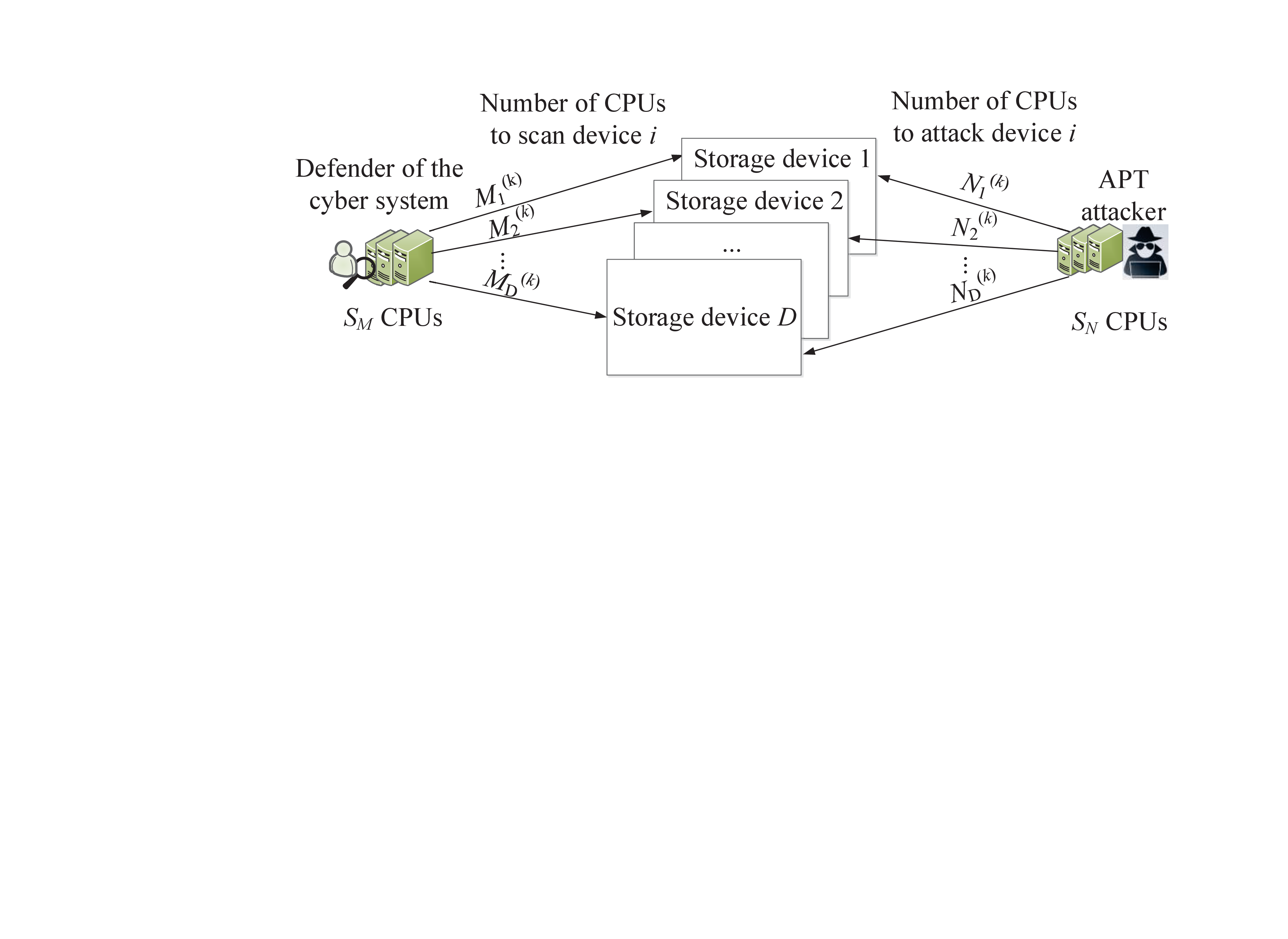}\\
\caption{CPU allocation game, in which a defender with $S_M$ CPUs chooses the CPU allocation strategy to scan the $D$ storage devices in the cloud storage system against an APT attacker with $S_N$ CPUs.}\label{system1}
\end{center}
\end{figure}

In this work, we consider an APT attacker who combines multiple attack methods, tools, and techniques such as \cite{C11} to steal data from the targeted cloud storage system over a long time. The attacker aims to steal more data from the $D$ storage devices with $S_N$ CPUs without being detected. At time $k$, $N_i^{(k)}$ out of the $S_N$ CPUs are used to attack storage device $i$, with $\sum\nolimits_{i = 1}^D {N_i^{(k)}}  \le {S_N}$. The attack CPU allocation at time $k$ is given by $\textbf{N}^{(k)} = \left[N_i^{(k)}\right]_{1\leq i \leq D} \in \triangle_A $, where the attack action set $\triangle_A$ is given by
\begin{align}\label{A_ActionSet}
\triangle_A = \left\{\left[ N_i\right]_{1\leq i \leq D} \bigg| \ {0 \leq N_i \leq  S_N}; \ \sum _{i=1}^D {N_i} \leq S_N \right\}.
\end{align}

A defender uses $S_M$ CPUs to scan the $D$ storage devices in the cloud storage system and aims to detect APTs as early as possible to minimize the total amount of the data stolen by the attacker. At time $k$, $M_i^{(k)}$ out of the $S_M$ CPUs are allocated to scan the device $i$ for APTs, with $\sum\nolimits_{i = 1}^D {M_i^{(k)}} \le {S_M}$. As each time slot is quite short, the storage defender can not scan all the data stored in the $D$ storage devices in a single time slot. The defense CPU allocation vector denoted by $\textbf{M}^{k}$ is defined as $\textbf{M}^{(k)} = \left[M_i^{(k)}\right]_{1\leq i \leq D} \in \triangle_D$, where the defense action set $\triangle_D$ is given by
\begin{align}\label{D_ActionSet}
\triangle_D = \left\{\left[M_i\right]_{1 \leq i \leq D} \bigg|\ {0 \leq M_i \leq  S_M}; \ \sum _{i=1}^D {M_i} \leq S_M \right\}.
\end{align}

If the attacker uses more CPUs than the defender in the APT defense game, the data stored in the storage device are assumed to be at risk. More specifically, the data stored in storage device $i$ are assumed to be safe if the number of the defense CPUs is greater than the number of attack CPUs at that time, i.e., $M_i^{(k)} > N_i^{(k)}$, and the data are at risks if $M_i^{(k)} < N_i^{(k)}$. If $M_i^{(k)} = N_i^{(k)}$, both players have an equal opportunity to control the storage device. Let $\textrm{sgn}(x)$ denote a sign function, with $\textrm{sgn}(x) = 1$ if $x>0$, $\textrm{sgn}(x) = -1$ if $x<0$, and 0 otherwise. Therefore, the data protection level of the cloud storage system at time $k$ denoted by $R^{(k)}$ is defined as the normalized size of the ``safe'' data that are protected by the defender and is given by
\begin{align}\label{R}
R^{(k)} = \frac{1}{\widehat{B}^{(k)}}\sum\limits_{i = 1}^D {B_i^{(k)}}{\textrm{sgn}} \left(M_i - N_i \right).
\end{align}
For ease of reference, our commonly used notations are summarized in Table \ref{table}. The time index $k$ in the superscript is omitted if no confusion occurs.
\begin{table}[!htbp]
\renewcommand{\arraystretch}{1.3}
\caption{SUMMARY OF SYMBOLS AND NOTATIONS}
\label{table_example}
\centering
\begin{tabular}{|c|l|}
\hline
$D$     & Number of storage devices\\
\hline
$ S_{M/N}$  &  Total number of defense/attack CPUs \\
\hline
$\textbf{M}^{(k)}/\textbf{N}^{(k)}$  &Defense/attack CPU allocation vector\\
\hline
$\triangle_{D/A}$ & Action set of the defender/attacker \\
\hline
$u^{(k)}_{D/A}$ & Utility of the defender/attacker at time $k$ \\
\hline
$\widehat{B}^{(k)}$ & Total size of the stored data at time $k$\\
\hline
$\textbf{B}^{(k)}$ & Data size vector of $D$ devices at time $k$\\
\hline
$R^{(k)}$ &  Data protection level at time $k$\\
\hline
\end{tabular}\label{table}
\end{table}

\section{CBG-Based CPU Allocation Game}
Colonel Blotto game is a powerful tool to study the strategic resource allocation of two agents each with limited resources in a competitive environment. Therefore, the interactions between the APT attacker and the defender of the cloud storage system regarding their CPU allocations can be formulated as a Colonel Blotto game with $D$ battlefields. By applying the time sharing (or division) technique, the defender (or attacker) can scan (or attack) multiple storage devices with a single CPU in a time slot, which can be approximately formulated as a continuous CBG. In this game, the defender chooses the defense CPU allocation vector  $\textbf {M}^{(k)} \in \triangle_D $ to scan the $D$ devices at time $k$, while the APT attacker chooses the attack CPU allocation $\textbf {N}^{(k)} \in \triangle_A $.

The utility of the defender (or the attacker) at time $k$ denoted by $u_D^{(k)}$ (or $u_A^{(k)}$) depends on the size of the data stored in the $D$ devices, and the data protection level of each device at the time. In the zero-sum game, by (\ref{R}) the utility of the defender is set as
\begin{align} \label{u}\notag\
&u_D^{(k)}\left( {{\textbf{M}},{\textbf{N}}} \right) = -{u_A^{(k)}}\left( {{\textbf{M}},{\textbf{N}}} \right) \\
&= \sum\limits_{i = 1}^D {B_i^{(k)}}{\textrm{sgn}} \left(M_i - N_i \right).
\end{align}
The CBG-based CPU allocation game rarely has a pure-strategy NE, because the attack CPU allocation $\textbf{N}^{(k)}$ can be chosen according to the defense CPU allocation $\textbf{M}^{(k)}$ and to defeat it for a higher utility $u_A^{(k)}$. Therefore, we study the CPU allocation game with mixed strategies, in which each player randomizes the CPU allocation strategies to fool the opponent.

In the mixed-strategy CPU allocation game, the defense strategy at time $k$ denoted by $x_{i,j}^{(k)}$ is the probability that the defender allocates $j$ CPUs to scan device $i$, i.e., $x_{i,j}^{(k)}= \mathrm{Pr}\left( M_i^{(k)} = j \right)$. Let $p_{m,j} \in \left[0,1\right]$ be the $m$-th highest feasible value of $x_{i,j}^{(k)}$. The mixed-strategy defense action set denoted by $\textrm{P}_{M}$ is given by
\begin{align} \nonumber
\textrm{P}_{M}= \bigg\{[p_{m,j}]_{ 1\leq m \leq D,\ 0 \leq j \leq S_M} \bigg| p_{m,j} \geq 0, \ \forall j; \\ \sum _{j = 0}^{S_M} p_{m,j} =1, \ \forall m  \bigg\}.
\end{align}
The defense mixed strategy vector denoted by $\textbf{x}^{(k)}$ is given by
\begin{align}
\textbf{x}^{(k)} = \left[x_{i,j}^{(k)}\right]_{1\leq i \leq D,\  0 \leq j \leq S_M} \in \textrm{P}_{M}.
\end{align}

Similarly, let $y_{i,j}^{(k)}$ denote the probability that $N_i^{(k)}$ CPUs are used to attack device $i$, i.e., $y_{i,j}^{(k)} = \mathrm{Pr}\left( N_i^{(k)} = j \right)$, and $q_{m,j} \in \left[0,1\right]$ be the $m$-th highest feasible value of $y_{i,j}^{(k)}$. The action set of the attacker in the mixed-strategy game denoted by $\textrm{P}_{N}$ is given by
\begin{align} \nonumber
\textrm{P}_{N}= \bigg\{[q_{m,j}]_{ 1\leq m \leq D,0 \leq j \leq S_N } \bigg| \ q_{m, j} \geq 0, \ \forall j; \\ \sum _{j = 0}^{S_N} q_{m, j} =1,  \ \forall m   \bigg\}.
\end{align}
 The attacker chooses the CPU allocation strategy in this game denoted by $\textbf{y}^{(k)}$ with
\begin{align}
\textbf{y}^{(k)} = \left[y_{i,j}^{(k)}\right]_{1\leq i\leq D, \ 0 \leq j \leq S_N} \in \textrm{P}_{N}.
\end{align}

The expected utility of the defender (or the attacker) averaged over all the feasible defense (or attack ) strategies is denoted by $U_D^{(k)}$ (or $U_A^{(k)}$) and given by (\ref {u}) as
\begin{align} \label {utility of attacker}\notag\
&{U_D^{(k)}}\left(\textbf{x},\textbf{y}\right)= - {U_A^{(k)}}\left(\textbf{x},\textbf{y}\right)\\
&=  E_{\substack{\textbf{M} \sim \textbf{x}\\
  \textbf{N} \sim \textbf{y}}}\left(\sum\limits_{i = 1}^D {B_i^{(k)}}{\textrm{sgn}} \left( {M_i - N_i} \right)\right).
\end{align}

The NE of the CBG-based CPU allocation game with mixed strategies denoted by $\left(\textbf{x}^{*}, \textbf{y}^{*}\right)$ provides the best-response policy, i.e., no player can increase his or her utility by unilaterally changing from the NE strategy. For example, if the defender chooses the CPU allocation strategy $\textbf{x}^{*}$, the APT attacker cannot do better than selecting $\textbf{y}^{*}$ to attack the $D$ storage devices. By definition, we have
\begin{align} \label {UD}
{U_D}\left(\textbf{x}^{*},\textbf{y}^{*}\right) & \ge {U_D}\left(\textbf{x},\textbf{y}^{*}\right), \ \ \forall \ \textbf{x} \in \textrm{P}_{M}\\ \label {UA}
{U_A}\left(\textbf{x}^{*},\textbf{y}^{*}\right) & \ge {U_A}\left(\textbf{x}^{*},\textbf{y}\right), \  \ \forall \ \textbf{y} \in \textrm{P}_{N}.
\end{align}

We first consider a CBG-based CPU allocation game  $\mathbb{G}_1$ with symmetric CPU resources, $S_M = S_N$, i.e., the defender and the attacker have the same amount of computational resources. Let $\mathbf{1}_{m \times n}$ (or $\mathbf{0}_{m \times n}$) be an all-1 (or 0) $m \times n$ matrix, $\left\lfloor \ \right\rfloor$ be the lower floor function, and the normalized defense CPUs $\beta = 2S_M/ \widehat{B}$.

\begin{theorem}\label{theorem1}
If $S_M = S_N$ and $B_i < \sum{_{1 \leq h \ne i \leq D}{B_h}}$, the CPU allocation game $\mathbb{G}_1$ has a NE $\left(\textbf{x}^*, \textbf{x}^*\right)$ given by
\begin{equation}\label{theorem1-equ}
\textbf{x}^* =
\begin{bmatrix}
\frac{1}{\left\lfloor \beta B_1\right\rfloor +1} \mathbf{1}_{1 \times \left({\lfloor{\beta B_1}\rfloor}+1\right)} &\mathbf{0}_{1 \times {\left({S_M - \lfloor{\beta B_1}\rfloor}\right)}}\\
\frac{1}{\lfloor \beta B_2\rfloor +1} \mathbf{1}_{1 \times \left({\lfloor{\beta B_2}\rfloor}+1\right)} &\mathbf{0}_{1 \times {\left({S_M - \lfloor{\beta B_2}\rfloor}\right)}}\\
\vdots & \vdots \\
\frac{1}{\lfloor \beta B_D \rfloor +1} \mathbf{1}_{1 \times \left({\lfloor{\beta B_D}\rfloor}+1\right)} &\mathbf{0}_{1 \times {\left({S_M - \lfloor{ \beta B_D}\rfloor}\right)}}\\
\end{bmatrix}.
\end{equation}
\end{theorem}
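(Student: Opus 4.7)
The plan is to verify that the symmetric profile $(\textbf{x}^*, \textbf{x}^*)$ satisfies the NE inequalities (\ref{UD})--(\ref{UA}) by adapting the Roberson (2006) symmetric Colonel Blotto argument to the time-sharing setting of this paper. Since the utility kernel $\textrm{sgn}(M_i - N_i)$ is anti-symmetric in $(M_i, N_i)$ and the action sets coincide when $S_M = S_N$, putting the same strategy on both sides immediately yields $U_D(\textbf{x}^*,\textbf{x}^*) = 0$. It therefore suffices to show $U_D(\textbf{x},\textbf{x}^*) \leq 0$ for every feasible deviation $\textbf{x} \in \textrm{P}_M$, and by symmetry the attacker's inequality (\ref{UA}) follows automatically.

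First I would verify feasibility of $\textbf{x}^*$. Each row is a uniform distribution on $\{0,1,\ldots,\lfloor\beta B_i\rfloor\}$, so it sums to $1$. The hypothesis $B_i < \sum_{h \neq i} B_h$ is equivalent to $B_i < \widehat{B}/2$, which gives $\beta B_i < S_M$, ensuring the support lies in $\{0,1,\ldots,S_M\}$ as demanded by (\ref{D_ActionSet}). The expected aggregate resource satisfies $\sum_i \lfloor \beta B_i\rfloor/2 \leq \beta \widehat{B}/2 = S_M$, which together with a Roberson-type coupling construction guarantees the existence of a joint distribution on $\triangle_D$ realizing these marginals.

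The core step is to exploit the additive separability of (\ref{u}) across devices so that $U_D(\textbf{x},\textbf{x}^*) = \sum_{i=1}^D B_i \, E_{M_i} E_{N_i}[\textrm{sgn}(M_i - N_i)]$ depends only on the marginals. Direct computation against $N_i$ uniform on $\{0,\ldots,L_i\}$ with $L_i = \lfloor \beta B_i\rfloor$ gives
\begin{equation*}
E_{N_i}[\textrm{sgn}(M_i - N_i)] = \frac{2 M_i - L_i}{L_i + 1}
\end{equation*}
whenever $M_i \leq L_i$, and $1$ otherwise. Weighting by $B_i$, using $L_i + 1 \geq \beta B_i$ so that $B_i/(L_i + 1) \leq 1/\beta = \widehat{B}/(2 S_M)$, and invoking $\sum_i M_i \leq S_M$ together with $\sum_i B_i = \widehat{B}$, the $B_i$-weighted sum collapses to at most $0$, matching the continuous-limit value of the symmetric Blotto game.

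The main technical obstacle will be the boundary cases where the defender ``over-scans'' a device (i.e. $M_i > L_i$) and the rounding slack between $\lfloor \beta B_i \rfloor$ and $\beta B_i$; these must be shown not to create a profitable deviation. The cleanest route is to observe that increasing $M_i$ beyond $L_i$ is strictly wasteful because $E_{N_i}[\textrm{sgn}(M_i - N_i)] = 1$ is already attained and additional CPUs yield no extra gain on device $i$, so any feasible budget is best spent within the supports $\{0,\ldots,L_i\}$; combined with the identity $\sum_i B_i \cdot L_i/(L_i+1) \geq \widehat{B} - D/\beta$ (which becomes tight in the time-sharing continuous limit), this pins $U_D(\textbf{x},\textbf{x}^*) \leq 0 = U_D(\textbf{x}^*,\textbf{x}^*)$ and closes the NE verification.
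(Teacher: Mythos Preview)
Your approach is genuinely different from the paper's. The paper does not verify the NE inequalities at all; it simply identifies $\mathbb{G}_1$ as a symmetric Colonel Blotto game with heterogeneous battlefield values $B_i$, invokes a known equilibrium result (Proposition~1 in \cite{Thomas2017}) stating that in the continuous game each marginal is $\mathcal{U}([0,2S_M B_i/\widehat{B}])$, and then discretizes by applying $\lfloor\cdot\rfloor$ to the upper endpoint to obtain (\ref{theorem1-equ}). Your route---computing $E_{N_i}[\textrm{sgn}(M_i-N_i)]$ explicitly against the discrete uniform and collapsing the $B_i$-weighted sum via the budget constraint---is more self-contained and, in principle, could deliver an exact discrete NE rather than a citation-plus-quantization.

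However, your final step does not close. From $B_i/(L_i+1)\le 1/\beta$ and $\sum_i B_i L_i/(L_i+1)\ge \widehat{B}-D/\beta$ you only obtain
\[
U_D(\textbf{x},\textbf{x}^*) \;\le\; \frac{2}{\beta}\sum_i M_i \;-\; \bigl(\widehat{B}-D/\beta\bigr) \;\le\; \widehat{B}-\widehat{B}+D/\beta \;=\; D/\beta,
\]
not $\le 0$. The rounding slack is not innocuous: with $D=3$, $S_M=S_N=2$, $B_i=1$ (so $\beta=4/3$, $L_i=1$), the pure deviation $\textbf{M}=(1,1,0)$ against marginals $N_i\sim\mathcal{U}(\{0,1\})$ gives $U_D=\tfrac12+\tfrac12-\tfrac12=\tfrac12>0$, so the discretized profile is \emph{not} an exact NE there. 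The paper sidesteps this by treating the time-sharing model as ``roughly continuous'' and presenting the floor merely as a quantization of the continuous equilibrium; if you want to match the paper's level of rigor, you should invoke the continuous Blotto result directly and present (\ref{theorem1-equ}) as its discretization, whereas an exact discrete NE would require a different construction than the one stated.
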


\begin{proof}
The CPU allocation game $\mathbb{G}_1$ can be formulated as a CBG with symmetric players on $D$ battlefields. The resource budget of the defender is $S_M$, the value of the $i$-th battlefield is $B_i$, and the total value of $D$ battlefields is $\widehat{B}$.
Let $\mathcal{U}(m, n)$ denote the uniform distribution between $m$ and $n$. By Proposition 1 in \cite{Thomas2017}, the mixed-strategy CBG game has an NE given by $(\textbf{x}^*, \textbf{x}^*)$, where $\textbf{x}^*$ is the probability distribution of $\textbf{M}$, and each vector coordinate $M_i$ is uniform distribution between 0 and $2S_M B_i/ \widehat{B}$.
Therefore, the CPU allocation of the $i$-th device $M_i^*$ is uniformly distributed on $[0, 2S_M B_i/ \widehat{B}]$, i.e.,
\begin{align} \label {7}
M_i^* \sim \mathcal{U}\left(\left\{0,1,2,..., \bigg\lfloor \frac{2S_M B_i}{\widehat{B}}\bigg\rfloor\right\}\right).
\end{align}
Thus this game has an NE given by $\left(\textbf{x}^*, \textbf{x}^*\right)$, where $\forall \  0 \leq j \leq  \lfloor 2S_M B_i\widehat{B}\rfloor$, $0\leq i \leq D$, each element of $\textbf{x}$ is given by
\begin{align} \label {Mi}
x_{i,j}^* = \mathrm{Pr}\left( M_i^* = j \right) = & \frac{1}{\lfloor 2S_MB_i/ \widehat{B}\rfloor +1},
\end{align}
which results in (\ref {theorem1-equ}).
\end{proof}
\begin{corollary}
At the NE of the symmetric CPU allocation game $\mathbb{G}_1$, the expected data protection level is zero and the utility of the defender is zero.
\end{corollary}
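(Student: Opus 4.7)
The plan is to exploit the symmetry of the NE given by Theorem~\ref{theorem1}: both the defender and the attacker use the same mixed strategy $\textbf{x}^*$, so the allocations $M_i^*$ and $N_i^*$ on device $i$ are independent draws from the identical discrete uniform distribution $\mathcal{U}\bigl(\{0,1,\dots,\lfloor 2S_M B_i/\widehat{B}\rfloor\}\bigr)$ given in~(\ref{7}). The key observation is that for any pair of i.i.d.\ integer-valued random variables $X,Y$, the difference $X-Y$ has a distribution symmetric about zero, and hence $E[\operatorname{sgn}(X-Y)]=0$ because $\Pr(X>Y)=\Pr(Y>X)$ by a swap argument.

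First I would write out $E[R^{(k)}]$ using the definition in~(\ref{R}): substituting the NE actions $\textbf{M}^*,\textbf{N}^*\sim\textbf{x}^*$ gives
\begin{align*}
E[R^{(k)}] = \frac{1}{\widehat{B}}\sum_{i=1}^{D} B_i\, E\bigl[\operatorname{sgn}(M_i^*-N_i^*)\bigr].
\end{align*}
Next I would argue that because $M_i^*$ and $N_i^*$ are i.i.d.\ under $\textbf{x}^*$, the map $(M_i^*,N_i^*)\mapsto(N_i^*,M_i^*)$ preserves the joint distribution, so $\Pr(M_i^*>N_i^*)=\Pr(M_i^*<N_i^*)$. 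Consequently $E[\operatorname{sgn}(M_i^*-N_i^*)]=0$ for every $i$, which kills the whole sum and yields $E[R^{(k)}]=0$.

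For the defender's utility, I would simply apply~(\ref{u}) and~(\ref{utility of attacker}), which give $U_D^{(k)}(\textbf{x}^*,\textbf{x}^*) = \widehat{B}\,E[R^{(k)}]$ up to a constant; equivalently, the same per-device symmetry argument shows $E[\operatorname{sgn}(M_i^*-N_i^*)]=0$, and so
\begin{align*}
U_D^{(k)}(\textbf{x}^*,\textbf{x}^*) = \sum_{i=1}^{D} B_i^{(k)}\, E\bigl[\operatorname{sgn}(M_i^*-N_i^*)\bigr] = 0.
\end{align*}

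I do not anticipate a real obstacle here; the corollary is essentially a symmetry/zero-sum identity once the form of the NE from Theorem~\ref{theorem1} is available. The only subtle point worth stating explicitly is that the NE strategy vector $\textbf{x}^*$ is used by \emph{both} players (as is the case for any symmetric zero-sum game), so the independence and identical-distribution assumptions underlying the swap argument actually hold; I would make this symmetry explicit at the start of the proof to avoid any ambiguity.
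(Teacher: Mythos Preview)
Your proposal is correct and follows essentially the same approach as the paper: both exploit that at the symmetric NE $(\textbf{x}^*,\textbf{x}^*)$ the allocations $M_i^*$ and $N_i^*$ are identically distributed, so $\Pr(M_i^*>N_i^*)=\Pr(M_i^*<N_i^*)$ and each term $E[\operatorname{sgn}(M_i^*-N_i^*)]$ vanishes. Your explicit swap argument is slightly more detailed than the paper's one-line computation, but the underlying idea is identical.
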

\begin{proof}
By (\ref {R}) and (\ref {theorem1-equ}), the data protection level over all the realizations of the mixed-strategy NE $(\textbf{x}^*,\textbf{x}^*)$ is given by
\begin{align} \label {zhengming1}
&E_{\textbf{x}^*}\left(R\right) = E_{\textbf{x}^*}\left(\frac{1}{\widehat{B}}\sum \limits_{i=1}^D {B_i}\textrm{sgn}\left(M_i^* - N_i^*\right)\right)\\ \label {zero}
&= \frac{1}{\widehat{B}}\sum \limits_{i=1}^D {B_i} \bigg(\mathrm{Pr}\left(N_i^* < M_i^*\right) - \mathrm{Pr}\left(N_i^* > M_i^*\right) \bigg)=0.
\end{align}
Similarly, by (\ref{u}) and (\ref{utility of attacker}) , we have $U_D = U_A = 0$.
\end{proof}
\textbf{Remark}:
If the APT attacker and the defender have the same number of CPUs and no storage device dominates in the game (i.e., $B_i < \sum{_{1 \leq h \ne i \leq D}{B_h}}$, $\forall i$), both players choose a number from $\{0,1,...,\lfloor 2S_M B_i/ \widehat{B}\rfloor\}$ to attack or scan storage device $i$ with probability $1 / \left({\lfloor 2S_M B_i/ \widehat{B}\rfloor +1}\right)$ by (\ref {Mi}). The data protection level $R$ by definition ranges between $-1$ and $1$. Therefore, the game makes a tie, yielding zero expected data protection level and zero utility of the defender.

We next consider a CBG-based CPU allocation game with asymmetric players denoted by $\mathbb{G}_2$, in which the attacker and the defender have different number of CPUs and compete over $D$ storage devices with an equal data size, i.e., $ B_i = B$, $\forall i$.
\begin{theorem}\label {theorem2}
If $ 2/D \le S_N/S_M \le 1$, $D \geq 3$ and $ B_i = B$, $\forall \ 1 \leq i \leq D$, the NE of the CPU allocation game $\mathbb{G}_2$ $\left(\textbf{x}^*, \textbf{y}^*\right)$ is given by
\begin{align}{\label {asymmetricx}}
\textbf{x}^* & =\left[\mathbf{0}_{D \times 1} \ \ \ \frac{1}{\left\lfloor \frac{2S_M}{D}\right\rfloor} \mathbf{1}_{D \times \left\lfloor \frac{2S_M}{D}\right\rfloor} \ \ \ \mathbf{0}_{D \times \left({S_M-\left\lfloor \frac{2S_M}{D}\right\rfloor}\right)} \right] \\{\label {asymmetricy}}
\textbf{y}^*& = \bigg[\left(1-\frac{S_N}{S_M}\right)\mathbf{1}_{D \times 1} \ \ \ \left(\frac{S_N}{S_M \left\lfloor \frac{2S_M}{D}\right\rfloor}\right) \mathbf{1}_{D \times \left\lfloor \frac{2S_M}{D}\right\rfloor} \ \ \ \cr
&\ \ \ \ \ \ \ \ \ \ \ \ \ \ \  \ \ \ \ \ \ \ \ \ \ \ \  \ \  \ \ \  \  \qquad \mathbf{0}_{D \times \left({S_M-\left\lfloor \frac{2S_M}{D}\right\rfloor}\right)}\bigg].
\end{align}
\end{theorem}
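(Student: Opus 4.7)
The plan is to mirror the strategy of Theorem~\ref{theorem1}: recognize $\mathbb{G}_2$ as an instance of an asymmetric Colonel Blotto game whose continuous (time-shared) equilibrium is already available from \cite{Thomas2017}, and then show that the integer CPU allocations $\textbf{x}^*$ and $\textbf{y}^*$ are the discrete counterpart of that equilibrium under the floor operator, valid precisely in the regime $2/D \le S_N/S_M \le 1$, $D\ge 3$.

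First I would verify that $\textbf{x}^*\in\textrm{P}_M$ and $\textbf{y}^*\in\textrm{P}_N$. Setting $k:=\lfloor 2S_M/D\rfloor$, each row of $\textbf{x}^*$ sums to $k\cdot\tfrac{1}{k}=1$, and each row of $\textbf{y}^*$ sums to $\left(1-\tfrac{S_N}{S_M}\right)+k\cdot\tfrac{S_N}{S_M k}=1$. The hypothesis $S_N/S_M\le 1$ makes the attacker's mass at $0$ nonnegative, while $S_N/S_M\ge 2/D$ and $D\ge 3$ force $k\ge 1$ so the supports are well-defined and the floors in the matrix dimensions are positive.

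Second I would invoke the asymmetric CBG equilibrium. With identical battle values $B_i=B$ and $S_M\ge S_N$, the defender is the ``stronger'' player and the attacker the ``weaker'' one. The classical equilibrium (the asymmetric analogue of Proposition~1 used in Theorem~\ref{theorem1}) states that, in the continuous game, each marginal $M_i$ is uniform on $[0, 2S_M/D]$, while each marginal $N_i$ places mass $1-S_N/S_M$ at $0$ and is uniform on $(0, 2S_M/D]$ with total mass $S_N/S_M$. Taking the integer grid $\{1,\ldots,k\}$ as the discrete image of $(0,2S_M/D]$ recovers exactly the nonzero blocks of $\textbf{x}^*$ and $\textbf{y}^*$.

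Third I would check the equilibrium conditions (\ref{UD}) and (\ref{UA}) directly. Since the utility in (\ref{utility of attacker}) decomposes additively over the $D$ devices, it suffices to examine a single battlefield. Fixing $\textbf{y}^*$ and computing $B\cdot\bigl(\mathrm{Pr}(N_i<j)-\mathrm{Pr}(N_i>j)\bigr)$ for $j\in\{0,1,\ldots,S_M\}$, one obtains the defender's per-device expected payoff; I would show that this payoff is (approximately) constant on the support $\{1,\ldots,k\}$ and strictly non-improving for $j=0$ or $j>k$, so that no unilateral deviation of the defender that keeps the marginal CPU expenditure within $S_M$ can raise $U_D$ above $U_D(\textbf{x}^*,\textbf{y}^*)$. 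A symmetric computation, together with the attacker's point-mass at zero absorbing the budget gap $S_M-S_N$, yields the analogous inequality (\ref{UA}), completing the proof that $(\textbf{x}^*,\textbf{y}^*)$ is an NE.

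The main obstacle is the last step: the asymmetric Blotto equilibrium is stated in continuous form in \cite{Thomas2017}, and transporting the indifference condition onto the integer grid via $\lfloor \cdot\rfloor$ requires care, because strict per-CPU indifference is only approximate once the marginals are discretized. Pinning down why the hypothesis $2/D\le S_N/S_M\le 1$, $D\ge 3$ is precisely the parameter window in which the defender's support truncates at $k$ CPUs (rather than at a smaller value with a point mass of its own) is the delicate part, and I would handle it by a monotonicity argument on the marginal payoff together with the budget identities $\sum_i E[M_i^*]\approx S_M$ and $\sum_i E[N_i^*]\approx S_N$.
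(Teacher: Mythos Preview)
Your proposal follows essentially the same route as the paper: identify $\mathbb{G}_2$ as an asymmetric Colonel Blotto game, import the known continuous equilibrium (stronger player uniform on $[0,2S_M/D]$, weaker player a point mass at $0$ plus a scaled uniform), and discretize via the floor. Two small points of divergence are worth noting. First, the paper invokes Theorem~2 of \cite{roberson2006colonel} rather than \cite{Thomas2017} for the asymmetric case; the latter was used only for the symmetric Theorem~\ref{theorem1}. Second, the paper does \emph{not} carry out your third step (direct verification of (\ref{UD})--(\ref{UA}) on the integer grid) or the monotonicity argument you flag as the ``main obstacle''; it simply writes down the continuous NE, replaces the interval by $\{1,\ldots,\lfloor 2S_M/D\rfloor\}$, and reads off $\textbf{x}^*,\textbf{y}^*$. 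In that sense your plan is more scrupulous than the paper's own proof, which treats the discretization as a transcription rather than something requiring justification.
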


\begin{proof}
The CPU allocation game $\mathbb{G}_2$ can be formulated as a CBG with asymmetric players on $D$ battlefields, where the defender (or attacker) chooses the probability density functions $\textbf{x}$ (or $\textbf{y}$) according to $S_M$ (or $S_N$) resource budget, and the resources allocated to the $i$-th battlefield is $M_i$ (or $N_i$).

By Theorem 2 in \cite{roberson2006colonel}, the unique Nash equilibrium for the defender and the attacker with $2/D \leq S_N/S_M \leq 1$ is given by
\begin{align} \label{Ep1}
\textbf{x}(M_i^*) &\sim \mathcal{U}\left( {\left[ {0,\frac{2{S_M}}{D}} \right]} \right) \\
\textbf{y}(N_i^*) &\sim \left(1 - \frac{S_N}{S_M}\right)\delta (N_i^*) + \frac{S_N}{S_M}\mathcal{U}\left( {\left[ {0,\frac{2{S_M}}{D}} \right]} \right).
\end{align}

Therefore, the CPU allocation of the $i$-th storage device $M_i^*$ on NE is uniformly distributed on $[0, 2{S_M}/{D}]$, i.e.,
\begin{align}\label {continuousx}
M_i^* \sim \mathcal{U}\left(\left\{0,1,2,..., \bigg\lfloor \frac{2S_M}{D}\bigg\rfloor\right\}\right).
\end{align}
Thus, the NE strategy of the CPU allocation game $\mathbb{G}_2$ is given by
\begin{align}
x_{i,j}^* &= \mathrm{Pr}\left( M_i^* = j \right) = \frac{1}{\left\lfloor \frac{2S_M}{D} \right\rfloor}, \ \forall \  1 \leq j \leq \left\lfloor \frac{2S_M}{D} \right\rfloor.
\end{align}
Thus, we have ({\ref {asymmetricx}}).

Similarly, we have
\begin{align}\label {continuousy}
N_i^*\sim \left(1 - \frac{S_N}{S_M}\right)\delta (N_i^*) + \frac{S_N}{S_M}\mathcal{U}\left(\left\{0,1,2,..., \bigg\lfloor \frac{2S_M}{D}\bigg\rfloor\right\}\right),
\end{align}
and thus
\begin{align}
y_{i,j}^* &= \mathrm{Pr}\left( N_i^* = j \right) =
\begin {cases}
1-\frac{S_N}{S_M}, &  \textrm{if} \   j = 0\\
\frac{S_N}{S_M \left\lfloor \frac{2S_M}{D}\right\rfloor},& \textrm{if} \  1 \leq j \leq \left\lfloor \frac{2S_M}{D} \right\rfloor.
\end{cases}
\end{align}
Thus, we have ({\ref {asymmetricy}}).
\end{proof}
\begin{corollary}
At the NE of the CPU allocation game $\mathbb{G}_2$, the expected data protection level is $1- {S_N}/{S_M}$ and
\begin{align}  \label{T2U}
U_D = -U_A = \left(1- \frac{S_N}{S_M}\right)\widehat{B}.
\end{align}
\end{corollary}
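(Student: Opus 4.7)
The plan is to combine the NE distributions from Theorem \ref{theorem2} with the uniform data size $B_i=B$ to reduce the expected utility to a per-battlefield calculation. By (\ref{utility of attacker}) and linearity of expectation, $U_D(\textbf{x}^*,\textbf{y}^*) = B\sum_{i=1}^D E[\textrm{sgn}(M_i^*-N_i^*)]$, and because the rows of $\textbf{x}^*$ and $\textbf{y}^*$ are all identical, each summand takes the same value, so it suffices to compute $E[\textrm{sgn}(M_i^*-N_i^*)]$ for a single coordinate.

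Write $K=\lfloor 2S_M/D\rfloor$ for brevity. From (\ref{asymmetricx}) the defense marginal places zero mass at $M_i^*=0$ and is uniform on $\{1,\dots,K\}$, while from (\ref{asymmetricy}) the attack marginal is the mixture $(1-S_N/S_M)\,\delta_0 + (S_N/S_M)\,\mathcal{U}(\{1,\dots,K\})$. I would split the expectation by conditioning on whether the attacker engages device $i$. On the event $N_i^*=0$, which has probability $1-S_N/S_M$, the defender wins with certainty because $M_i^*\ge 1$ almost surely, so $\textrm{sgn}(M_i^*-N_i^*)=1$. On the complementary event, $M_i^*$ and $N_i^*$ are i.i.d.\ uniform on the same support $\{1,\dots,K\}$, and an exchangeability argument gives $\mathrm{Pr}(M_i^*>N_i^*)=\mathrm{Pr}(M_i^*<N_i^*)$, so the conditional expectation of the sign vanishes.

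Combining the two cases yields $E[\textrm{sgn}(M_i^*-N_i^*)]=(1-S_N/S_M)\cdot 1 + (S_N/S_M)\cdot 0 = 1-S_N/S_M$, independent of $K$ and of $i$. Summing over the $D$ devices gives $U_D = DB(1-S_N/S_M) = (1-S_N/S_M)\widehat{B}$, and $U_A = -U_D$ follows immediately from the zero-sum identity in (\ref{u}). Dividing $U_D$ by $\widehat{B}$ and using (\ref{R}) recovers the expected data protection level $1 - S_N/S_M$.

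There is no serious obstacle; the only delicate point is noticing that the defense marginal in $\textbf{x}^*$ places no mass at zero, which is precisely what turns an abandoned battlefield into a guaranteed win for the defender and drives the whole answer. The hypothesis $2/D \le S_N/S_M \le 1$ inherited from Theorem \ref{theorem2} ensures that the mixture weight $S_N/S_M$ appearing in $\textbf{y}^*$ is a legitimate probability, so no extra side conditions enter the calculation.
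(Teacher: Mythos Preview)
Your proposal is correct and follows essentially the same approach as the paper: both condition on whether the attacker places zero CPUs on a device, use the fact that $M_i^*\ge 1$ almost surely so the defender wins any abandoned battlefield, and invoke symmetry when $N_i^*\neq 0$ to make that conditional contribution vanish. Your write-up is in fact more explicit than the paper's about the key point that $\textbf{x}^*$ puts no mass at zero, but the underlying decomposition and logic are identical.
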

\begin{proof}
According to (\ref {R}), (\ref {zhengming1}), (\ref {asymmetricx}) and (\ref {asymmetricy}), as $B_i = B$, we have
\begin{align} \nonumber
&E_{\textbf{x}^*, \textbf{y}^*}\left(R\right)  = \frac{1}{\widehat{B}} \sum \limits_{i=1}^D {B_i} \left(\mathrm{Pr}\left(N_i^* < M_i^*\right) - \mathrm{Pr}\left(N_i^* > M_i^*\right) \right)\\  \nonumber
& = \frac{1}{\widehat{B}}\sum \limits_{i=1}^D {B_i} \bigg(\mathrm{Pr}\left(M_i^* > 0 \right) + \mathrm{Pr}\left(N_i^* < M_i^*\bigg|{N_i^*\neq0}\right) \cr
& \ \ \ \ -\mathrm{Pr}\left(N_i^* > M_i^*\bigg|{N_i^*\neq0}\right)\bigg)\\ \label {12}
&= \frac{1}{\widehat{B}}\sum \limits_{i=1}^D {B_i}\left(1- \frac{S_N}{S_M}\right) = 1-\frac{S_N}{S_M}.
\end{align}

Similarly, by (\ref{u}) and (\ref{utility of attacker}), we have (\ref {T2U}).
\end{proof}
\textbf{Remark}: The defender has to have more CPU resources than APT attackers, otherwise the cloud storage system is unlikely to protect the data privacy. Therefore, a subset of the storage devices are safe from the attacker who has to match the defender on the other storage devices. In this case, the defender wins the game, and the utility increases with the total data size. The expected data protection level increases with the resource advantage of the defender over the attacker, i.e., $S_M /S_N$.

The APT defense performance of the CPU allocation game $\mathbb{G}_2$ at the NE is presented in Fig. \ref {T2}, in which the $D$ storage devices are threatened by an APT attacker with $150$ attack CPUs. If the defender uses 1200 CPUs instead of 600 CPUs to scan the 20 devices, the data protection level increases about 10.5\% to 93\% and the utility of the defender increases by 18.75\%, The data protection level of the cloud storage system protected by 1200 CPUs slightly decreases by 2.8\%, if the number of the storage devices $D$ changes from 20 to 80. The APT defense performance of the CBG game at the NE provides the optimal defense performance with known APT attack model and defense model, and can be used as a guideline to design the CPU allocation scheme.

\begin{figure}[!t]
\begin{center}
\includegraphics[height=2.2 in]{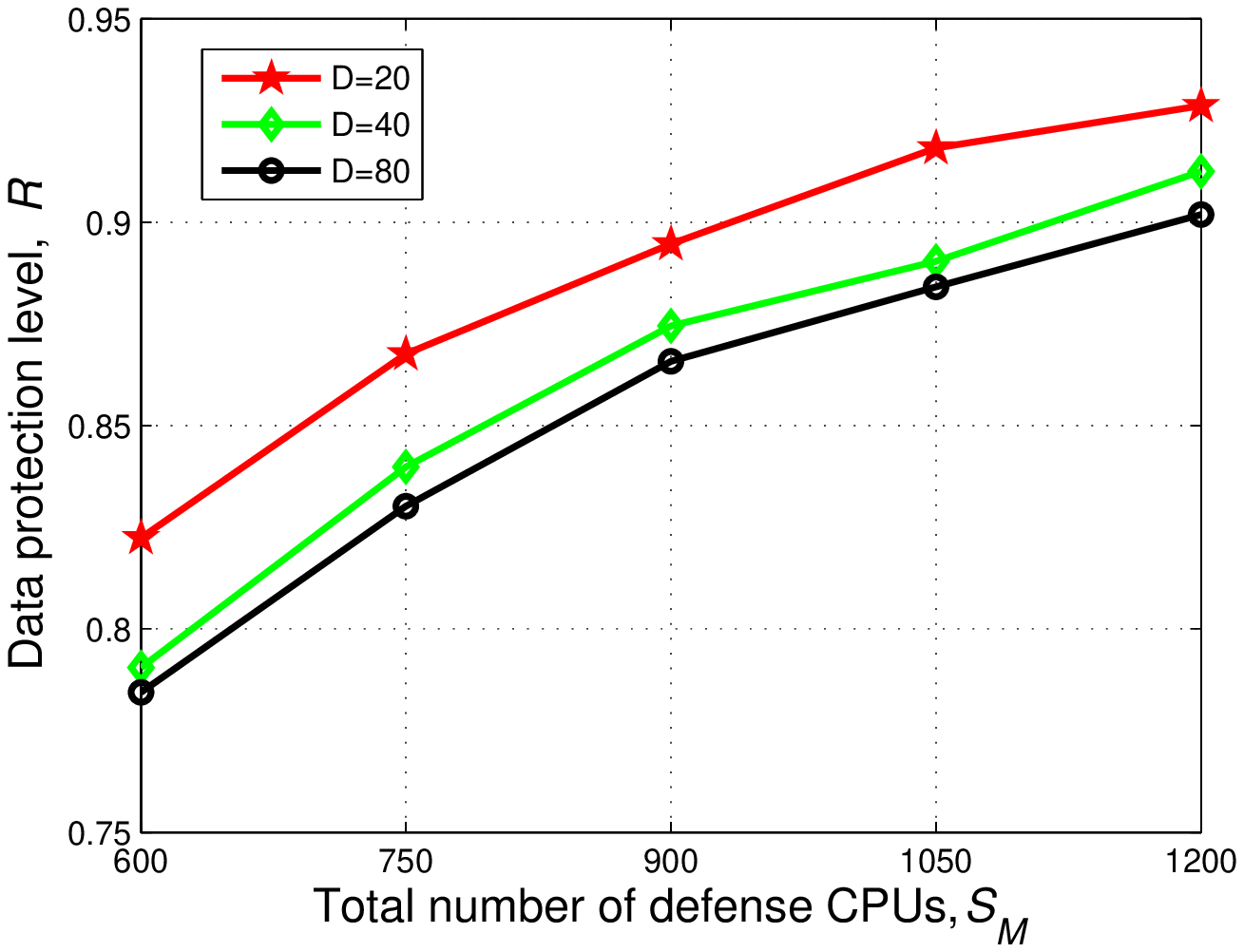}\\
{\footnotesize (a) Data protection level} \\
\includegraphics[height=2.2 in]{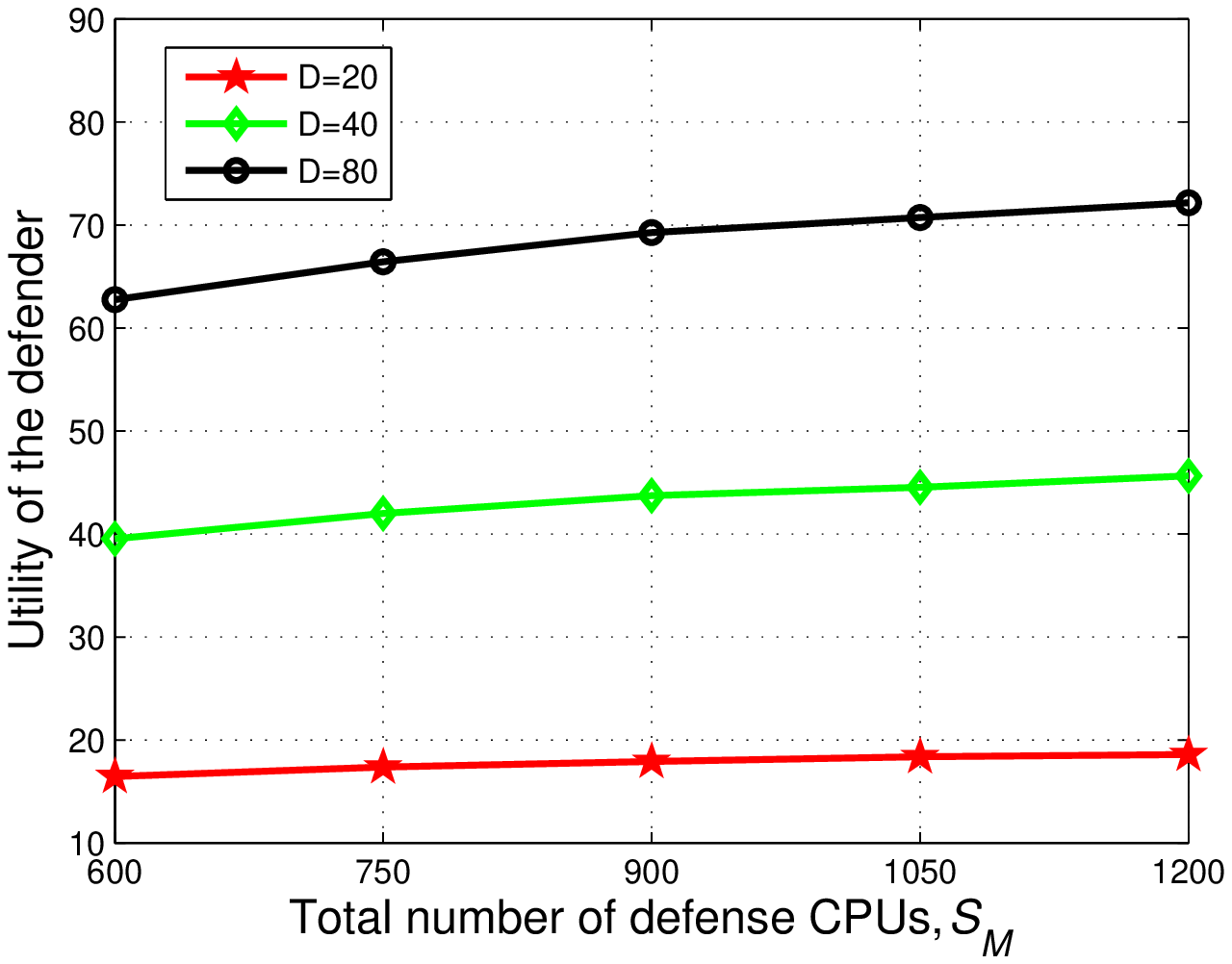}\\
{\footnotesize (b) Utility of the defender} \\
\caption{APT defense performance of the CBG-based CPU allocation game $\mathbb{G}_2$ at the NE with $D$ storage devices and $S_M$ defense CPUs against an APT attacker with 150 CPUs.}\label {T2}
\end{center}
\end{figure}

\section{Hotbooting PHC-based CPU Allocation}
\begin{figure}[!t]
\begin{center}
\includegraphics[height=2.5 in]{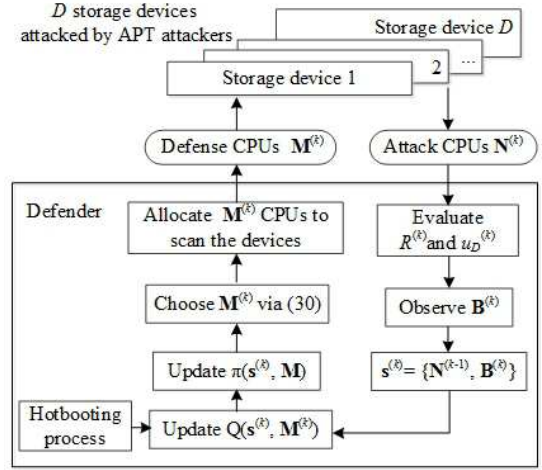}\\
\caption{Illustration of the hotbooting PHC-based defense CPU allocation.}\label{PHC-based system}
\end{center}
\end{figure}
As a defender is usually unaware of the attack policy, we propose a hotbooting PHC-based CPU allocation scheme to scan $D$ storage devices in the dynamic APT detection game, as illustrated in Fig. \ref{PHC-based system}. At each time slot, the defender of the cloud storage system observes the amount of the data stored in each storage device that is quantized into $L$ levels, $B_i^{(k)}\in\{l/L\}_{0\leq l \leq L}$. In addition, the defender also evaluates the compromised storage devices that are found to be attacked by APTs in the last time slot, and uses them to estimate the last attack CPU allocation ${\textbf{N}^{(k-1)}}$. The defense CPU allocation is chosen according to the current state denoted by ${\textbf{s}^{(k)}}$, which consists of the current data sizes and the previous attack CPU allocation, i.e., ${\textbf{s}^{(k)}} =\{ {\textbf{N}^{(k-1)}}, \textbf{B}^{(k)}\}$. The resulting defense strategy ${\textbf{M}^{(k)}} \in \triangle_D $, where $\triangle_D$ is the defense action set given by (\ref{D_ActionSet}).

\begin{algorithm}[h]
\centering
\caption{CPU allocation with hotbooting PHC}\label{hotbooting}
\begin{algorithmic}[1]
\STATE Hotbooting defense process in Algorithm 2
\STATE Initialize $\alpha$, $\gamma$, $\delta$, $\textbf{N}^{(0)}$ and $\textbf{B}^{(1)}$
\STATE Set $\textbf{Q} = \overline{\textbf{Q}}$, $\pi = \overline{\pi}$
\FOR {$k = 1, 2, 3,...$}
\STATE Observe the current data size $\textbf{B}^{(k)}$
\STATE ${\textbf{s}^{(k)}} =\{ {\textbf{N}^{(k-1)}}, \textbf{B}^{(k)}\}$
\STATE Choose $\textbf{M}^{(k)}\in \triangle_D$ with $\pi({\textbf{s}^{(k)}},\textbf{M})$ via (\ref{pi})
\FOR {$i = 1, 2,...D$}
\STATE Allocate $M_i^{(k)}$ CPUs to scan storage device $i$
\ENDFOR
\STATE Observe the compromised storage devices and estimate $\textbf{N}^{(k)}$
\STATE Obtain $u_D^{(k)}$ via (\ref{u})
\STATE Update $Q({\textbf{s}}^{(k)}, \textbf{M}^{(k)})$ via (\ref{Q})
\STATE Update $V({\textbf{s}}^{(k)})$ via (\ref{V})
\STATE Update $\pi({\textbf{s}}^{(k)}, \textbf{M})$ via (\ref{probability})
\ENDFOR
\end{algorithmic}
\end{algorithm}
The Q-function for each action-state pair denoted by $Q\left(\textbf{s},\textbf{M}\right)$ is the expected discounted long-term reward of the defender, and is updated in each time slot according to the iterative Bellman equation as follows:
\begin{align}\nonumber
Q\left({\textbf{s}^{(k)}},\textbf{M}^{(k)}\right) &\leftarrow \left(1 - \alpha \right)Q\left({\textbf{s}^{(k)}},\textbf{M}^{(k)}\right) \\
&+ \alpha \left(u_D^{(k)} + \gamma V\left({\textbf{s}'}\right)\right) \label{Q},
\end{align}
where the learning rate $\alpha \in \left({0,1} \right]$ is the weight of the current experience, the discount factor $\gamma \in \left[ {0,1} \right]$ indicates the uncertainty of the defender on the future reward, $\textbf{s}'$ is the next state if the defender uses $\textbf{M}^{(k)}$ at state $\textbf{s}^{(k)}$, and the value function  $V\left({\textbf{s}}\right)$ maximizes $Q\left(\textbf{s},\textbf{M}\right)$ over the action set given by
\begin{align}
V\left({\textbf{s}^{(k)}}\right) &=\mathop {\max }\limits_{\textbf{M}' \in \triangle_D} Q\left({\textbf{s}^{(k)}},\textbf{M}'\right).\label{V}
\end{align}

The mixed-strategy table of the defender denoted by $\pi \left({\textbf{s}},\textbf{M}\right)$ provides the distribution of the number of CPUs $\textbf{M}$ over the $D$ storage devices under state $\textbf{s}$ and is updated via
\begin{align} \label{probability} \nonumber
&\pi({\textbf{s}^{(k)}},\textbf{M})\leftarrow \pi({\textbf{s}^{(k)}},\textbf{M})\\
&+\begin{cases} \delta, & \text{if $\textbf{M}= \mathop{\arg\max}\limits_{\textbf{M}' \in \triangle_D} Q\left({\textbf{s}^{(k)}},\textbf{M}'\right)$} \\ \frac{\delta}{1-|\triangle_D |}, &\textrm{o.w}.
\end{cases}
\end{align}
In this way, the probability of the action that maximizes the Q-function increases by $\delta$, with $0 < \delta \leq 1$, and the probability of other actions decrease by {${\delta} /\left(|\triangle_D |-1\right)$. The defender then selects the number of CPUs $\textbf{M}^{(k)} \in \triangle_D$ according to the mixed strategy $\pi \left({\textbf{s}^{(k)}},\textbf{M}\right)$, i.e.,
\begin{align}
\Pr \left({\textbf{M}^{(k)}} = {\widehat{\textbf{M}}}\right)= \pi \left({\textbf{s}^{(k)}},\widehat{\textbf{M}}\right),\  \forall \ \widehat{\textbf{M}} \in \triangle_D. \label {pi}
\end{align}
\begin{algorithm}[h]
\centering
\caption{Hotbooting defense process}\label{hotbooting}
\begin{algorithmic}[1]
\STATE Initialize $\xi$, $K$, $\alpha$, $\gamma $, $\delta$, $\textbf{N}^{(0)}$ and $\textbf{B}^{(1)}$
\STATE Set $\textbf{Q} = {\textbf{0}}_{(|\triangle_A| \times L^D) \times |\triangle_D|}$, $\textbf{V} = {\textbf{0}}_{(|\triangle_A|\times L^D) \times 1}$, $\pi =\frac{\textbf{1}}{|\triangle_D |}$\\
\FOR {$i = 1, 2, 3,..., \xi$}
\STATE Emulate a similar CPU allocation scenario for the defender to scan storage devices
\FOR {$k = 1,2,...,K$}
\STATE Observe the current data size $\textbf{B}^{(k)}$
\STATE ${\textbf{s}^{(k)}} =\{ {\textbf{N}^{(k-1)}}, \textbf{B}^{(k)}\}$
\STATE Choose $\textbf{M}^{(k)}\in \triangle_D$ via (\ref{pi})
\FOR {$j = 1, 2,...D$}
\STATE Allocate $M_j^{(k)}$ CPUs to scan storage device $j$
\ENDFOR
\STATE Observe the compromised storage devices and estimate $\textbf{N}^{(k)}$
\STATE Obtain $u_D^{(k)}$ via (\ref{u})
\STATE Update $\textbf{Q}$ and $\pi$ via (\ref{Q})-(\ref{probability})
\ENDFOR
\ENDFOR
\STATE Output $\overline{\textbf{Q}} \leftarrow \textbf{Q}$, $\overline{\pi} \leftarrow \pi$
\end{algorithmic}
\end{algorithm}
We apply the hotbooting technique to initialize both the Q-value and the strategy table $\pi$ with the CPU allocation experiences in similar environments. The hotbooting PHC-based CPU allocation saves random explorations at the beginning stage of the dynamic game and thus accelerates the learning speed. As shown in Algorithm 2, $\xi$ CPU allocation experiences are performed before the game. Each experiment lasts $K$ time slots, in which the defender chooses the number of CPUs to scan the $D$ storage devices according to the mixed-strategy table $\pi \left({\textbf{s}^{(k)}},\textbf{M}\right)$. The defender observes the attack CPU distribution and evaluates the utility $u_D^{(k)}$. Both the Q-function and $\pi$ are updated via (\ref{Q})-(\ref{probability}) in each time in the experiences.

The Q-values as the output of the hotbootng process  based on the $\xi$ experiences denoted by $\overline{\textbf{Q}}$ is used to initialize the Q-values in Algorithm 1. Similarly, the mixed-strategy table as the output of Algorithm 2 based on the $\xi$ experiences denoted by $\overline{\pi}$ is used to initialize $\pi$ in Algorithm 1. The learning time of Algorithm 1 increases with the dimension of the action-state space $|\triangle_D |\times |\triangle_A|\times L^D$, which increases with the number of storage devices in the cloud storage system and the number of CPUs, yielding serious performance degradation.
\begin{figure*}[!t]
\begin{center}
\includegraphics[height= 3.6 in]{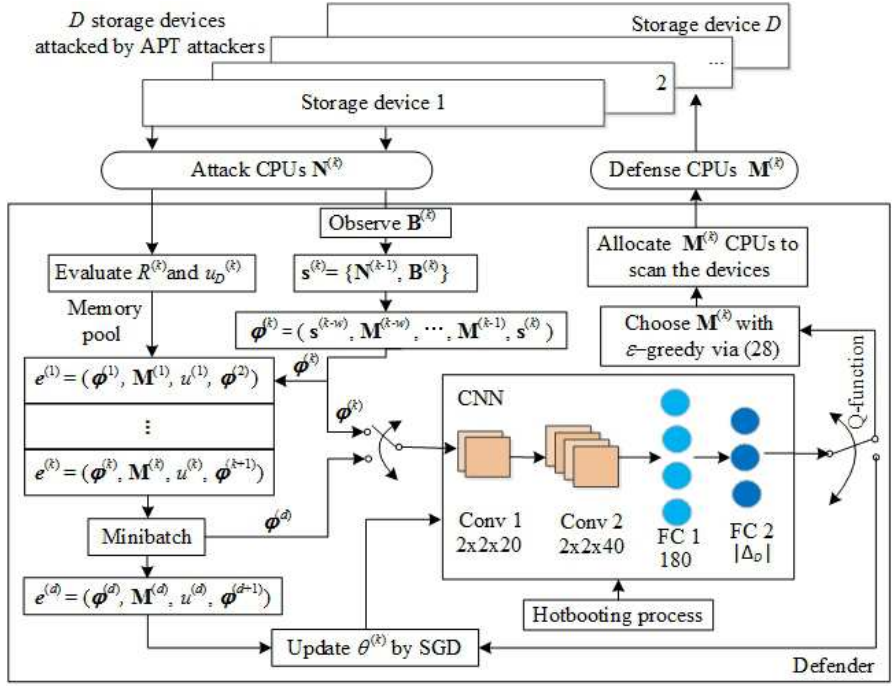} \\
\caption{Hotbooting DQN-based defense CPU allocation.}\label{DQN-based system}
\end{center}
\end{figure*}

\section{Hotbooting DQN-Based CPU Allocation}
In this section, we propose a hotbooting DQN-based CPU allocation scheme to improve the APT defense performance of the cloud storage system. This scheme applies deep convolutional neural network, a deep reinforcement learning technique, to compress the action-state space and thus accelerate the learning process. As illustrated in Fig. \ref{DQN-based system}, the deep convolution neural network is a nonlinear approximator of the Q-value for each action. The CNN architecture allows a compact storage of the learned information between similar states \cite{kaelbling1996reinforcement}.

The DQN-based CPU allocation as summarized in Algorithm 3 extends the system state $\textbf{s}^{(k)}$ as in Algorithm 1 to the experience sequence at time $k$ denoted by $\boldsymbol{\varphi}^{(k)}$ to accelerate the learning speed and improve the APT resistance. More specifically, the experience sequence consists of the current system state $\textbf{s}^{(k)}$ and the previous $W$ system state-action pairs, i.e.,  $\boldsymbol{\varphi}^{(k)} = \left(\textbf{s}^{(k-W)}, \textbf{M}^{(k-W)}, ...,\textbf{s}^{(k-1)},\textbf{M}^{(k-1)}, \textbf{s}^{(k)} \right)$.
\renewcommand{\baselinestretch}{1.0}
\begin{algorithm}[h]
\centering
\caption{Hotbooting DQN-based CPU allocation}\label{hotbooting}
\begin{algorithmic}[1]
\STATE Initialize $\textbf{N}^{(0)}$, $\textbf{B}^{(1)}$, $W$ and $H$
\STATE Set $\theta= \overline{\theta}$, $\mathcal{D} = \emptyset$
\FOR {$k = 1, 2, 3,...$}
\STATE Observe the current data size $\textbf{B}^{(k)}$
\STATE ${\textbf{s}^{(k)}} =\{ {\textbf{N}^{(k-1)}}, \textbf{B}^{(k)}\}$
\IF {$k\leq W$}
    \STATE Choose $\textbf{M}^{(k)} \in \triangle_D$ at random
\ELSE
\STATE $\boldsymbol{\varphi}^{(k)} = \left(\textbf{s}^{(k-W)}, \textbf{M}^{(k-W)}, ...,\textbf{s}^{(k-1)},\textbf{M}^{(k-1)}, \textbf{s}^{(k)} \right)$
\STATE Set $\boldsymbol{\varphi}^{(k)}$ as the input of the CNN
\STATE Observe the output of the CNN to obtain $Q\left(\boldsymbol{\varphi}^{(k)}, \textbf{M}\right)$
\STATE Choose $\textbf{M}^{(k)} \in \triangle_D$ via (\ref{greedy})
\ENDIF
\FOR {$i = 1, 2,...D$}
\STATE Allocate $M_i^{(k)}$ CPUs to scan storage device $i$
\ENDFOR
\STATE Observe the compromised storage devices and estimate $\textbf{N}^{(k)}$
\STATE Obtain $u_D^{(k)}$ via (\ref{u})
\STATE Observe $\boldsymbol{\varphi}^{(k+1)}$
\STATE$\mathcal{D} \leftarrow \mathcal{D} \bigcup \left(\boldsymbol{\varphi}^{(k)}, \textbf{M}^{(k)}, u_D^{(k)}, \boldsymbol{\varphi}^{(k+1)} \right)$
\FOR {$d = 1, 2, 3,...,H$}
\STATE Select $\left(\boldsymbol{\varphi}^{(d)}, \textbf{M}^{(d)}, u_D^{(d)}, \boldsymbol{\varphi}^{(d+1)} \right) \in \mathcal{D}$ at random
\STATE Calculate $G$ via (\ref{R-target})
\ENDFOR
\STATE Update $\theta^{(k)}$ via (\ref{updateweight})
\ENDFOR
\end{algorithmic}
\end{algorithm}
The experience sequence $\boldsymbol{\varphi}^{(k)}$ is reshaped into a $5\times5$ matrix and then input into the CNN, as shown in Fig. \ref{DQN-based system}. The CNN consists of two convolutional (Conv) layers and two fully connected (FC) layers, with parameters chosen to achieve a good performance according to the experiment results as listed in Table \ref {table2}. The filter weights of the four layers in the CNN at time $k$ are denoted by $\theta^{(k)}$ for simplicity. The first Conv layer includes 20 different filters. Each filter has size $2\times2$ and uses stride 1. The output of the first Conv is 20 different $4\times4$ feature maps that are then passed through a rectified linear function (ReLU) as an activation function. The second Conv layer includes 40 different filters. Each filter has size $2\times2$ and stride 1. The outputs of the 2nd Conv layer are 40 different $3\times3$ feature maps, which are flattened to a 360-dimension vector and then sent to the two FC layers. The first FC layer involves 180 rectified linear units, and the second FC layer provides the Q-value for each CPU allocation policy $\textbf{M} \in \triangle_D$ at the current system sequence $\boldsymbol{\varphi}^{(k)}$.
\begin{table}[!hbp]
\caption{CNN Parameters}
\centering
\begin{tabular}{|c|c|c|c|c|}
\hline
Layer  & Conv1 & Conv2  & FC1  & FC2 \\
\hline
Input  &$5\times5$  &$4\times4\times20$  & 360 & 180 \\
\hline
Filter size & $2\times2$ & $2\times2$ & / & / \\
\hline
Stride & 1 & 1& / & /\\
\hline
\# Filters  & 20 & 40 & 180 & $|\triangle_D |$\\
\hline
Activation & ReLU & ReLU & ReLU & ReLU\\
\hline
Output & $4\times4\times20$ & $3\times3\times40$  &180 &$|\triangle_D |$\\
\hline
\end{tabular}\label{table2}
\end{table}

The Q-function as the expected long-term reward for the state sequence $\boldsymbol{\varphi}$ and the action $\textbf{M}$, is given by definition as
\begin{align} \label {loss function}
Q\left(\boldsymbol{\varphi}^{(k)},\textbf{M}\right) = \mathbb{E}_{\boldsymbol{\varphi}'}\left[u_D^{(k)} + \gamma  \mathop{\max}\limits_{\textbf{M}'} Q\left(\boldsymbol{\varphi}',\textbf{M}'\right)|\boldsymbol{\varphi}^{(k)},\textbf{M}\right],
\end{align}
where $\boldsymbol{\varphi}'$ is the next state sequence by choosing defense CPU allocation $\textbf{M}$ at state  $\boldsymbol{\varphi}^{(k)}$.

To make a tradeoff between exploitation and exploration, the defense CPU allocation is chosen according to the $\varepsilon$-greedy policy \cite{rodrigues2009dynamic}. More specifically, the CPU allocation $\textbf{M}^{(k)}$ that maximizes the Q-function is chosen with a high probability $1-\varepsilon$, and other actions are selected with a low probability to avoid staying in the local maximum, i.e.,
\begin{align}
\Pr \left({\textbf{M}^{(k)}} = {\widehat{\textbf{M}}}\right)=
\begin {cases}
1-\varepsilon, & {\widehat{\textbf{M}}}= {\mathop{\rm arg}\nolimits}~{\mathop{\rm max}\nolimits}_{\textbf{M}'} Q\left(\boldsymbol{\varphi}^{(k)},{\textbf{M}'}\right)\\
\frac{\varepsilon}{|\triangle_D |-1},&\textrm{o.w}.
\end{cases} \label {greedy}
\end{align}

Based on the experience replay as shown in Fig. \ref {DQN-based system}, the CPU allocation experience at time $k$ denoted by $\textbf{e}^{(k)}$ is given by $\textbf{e}^{(k)} = \left(\boldsymbol{\varphi}^{(k)}, \textbf{M}^{(k)}, u_D^{(k)}, \boldsymbol{\varphi}^{(k+1)} \right)$, and saved in the replay memory pool denoted by  $\mathcal{D}$, with $\mathcal{D} = \left\{\textbf{e}^{(1)}, \cdots, \textbf{e}^{(k)}\right\}$.
An experience $\textbf{e}^{(d)}$ is chosen from the memory pool at random, with $1 \leq d \leq k$. The CNN parameters $\theta^{(k)}$ are updated by the stochastic gradient descent (SGD) algorithm in which the mean-squared error between the network's output and the target optimal Q-value is minimized with the minibatch updates. The loss function denoted by $L$ in the stochastic gradient descent algorithm is chosen as
\begin{align}
L\left(\theta^{(k)}\right) = \mathbb{E}_{\boldsymbol{\varphi}, \textbf{M}, u_D^{(k)}, \boldsymbol{\varphi}'}\left\{\left(G - Q\left(\boldsymbol{\varphi}, \textbf{M}; \theta^{(k)}\right)\right)^2\right\},
\end{align}
where the target value denoted by $G$ approximates the optimal value $u_D^{(k)} + \gamma Q^*\left(\boldsymbol{\varphi}',\textbf{M}'\right)$ based on the previous CNN parameters $\theta^{(k-1)}$, and is given by
\begin{align} \label{R-target}
G= u_D^{(k)} + \gamma \mathop {\max }\limits_{\textbf{M}'} Q\left(\boldsymbol{\varphi}',\textbf{M}'; \theta^{(k-1)}\right).
\end{align}

\renewcommand{\baselinestretch}{1.0}
\begin{algorithm}[h]
\centering
\caption{Hotbooting process for Algorithm 3}
\begin{algorithmic}[1]
\STATE Initialize $\textbf{N}^{(0)}$, $\textbf{B}^{(1)}$, $\theta^{(0)}$, $\xi$, $K$ and $\mathbb{E} = \emptyset$
\FOR {$i = 1, 2, 3,..., \xi$}
\STATE Emulate a similar CPU allocation scenario for the defender to scan storage devices
\FOR {$k = 1, 2, 3,...,K$}
\STATE Observe the output of the CNN to obtain $Q\left(\boldsymbol{\varphi}^{(k)}, \textbf{M}\right)$
\STATE Choose $\textbf{M}^{(k)}$ via (\ref{greedy})
\FOR {$i = 1, 2,...D$}
\STATE Allocate $M_i^{(k)}$ CPUs to scan storage device $i$
\ENDFOR
\STATE Observe the compromised storage devices and estimate $\textbf{N}^{(k)}$
\STATE Obtain $u_D^{(k)}$ via (\ref{u})
\STATE Observe the resulting state sequence $\boldsymbol{\varphi}^{(k+1)}$
\STATE $\mathbb{E }\leftarrow \mathbb{E} \bigcup \left(\boldsymbol{\varphi}^{(k)}, \textbf{M}^{(k)}, u_D^{(k)}, \boldsymbol{\varphi}^{(k+1)} \right)$
\STATE Perform minibatch update as steps 19-23 in Algorithm 3 to update $\theta^{(k)}$
\ENDFOR
\ENDFOR
\STATE Output $\overline{\theta} \leftarrow \theta^{(k)}$
\end{algorithmic}
\end{algorithm}
The gradient of the loss function with respect to the weights $\theta^{(k)}$ is given by
\begin{align}\label{updateweight} \nonumber
&\nabla_{\theta^{(k)}}L\left(\theta^{(k)}\right) = \mathbb{E}_{\boldsymbol{\varphi}, \textbf{M}, u_D, \boldsymbol{\varphi}'}\left[G \nabla_{\theta^{(k)}}Q\left(\boldsymbol{\varphi}, \textbf{M}; \theta^{(k)}\right)\right] \\
& \ \ \ \ \ \ - \mathbb{E}_{\boldsymbol{\varphi},\textbf{M}}\left[Q\left(\boldsymbol{\varphi}, \textbf{M}; \theta^{(k)}\right) \nabla_{\theta^{(k)}} Q\left(\boldsymbol{\varphi}, \textbf{M}; \theta^{(k)}\right)\right].
\end{align}
This process repeats $H$ times to update $\theta^{(k)}$ in Algorithm 3.

Similar to Algorithm 1, we apply the hotbooting technique to initialize the CNN parameters in the DQN-based CPU allocation rather than initializing them randomly to accelerate the learning speed. As shown in Algorithm 4, the defender stores the emulational experience $\left(\boldsymbol{\varphi}^{(k)}, \textbf{M}^{(k)}, u_D^{(k)}, \boldsymbol{\varphi}^{(k+1)} \right)$ in the database $\mathbb{E}$ and the resulting $\overline{\theta}$ based on $\xi$ experiences are used to set $\theta$ as shown in  Algorithm 3.

\section{Simulation Results}
Simulations have been performed to evaluate the APT defense performance of the CPU allocation schemes in a cloud storage system, with the CNN parameters as listed in Table \ref {table2}. In the simulations, some APT attackers applied the $\varepsilon$-greedy algorithm to choose the number of CPUs to attack each of the $D$ storage devices based on the defense history and some smarter attackers first induced the defender to use a specific ``optimal'' defense strategy based on the estimated defense learning algorithm and then attacked the system accordingly. We set $\alpha = 0.9$, $\gamma = 0.5$, $\delta =0.02$,  $W = 12$, and $H = 16$, if not specified otherwise, to achieve good security performance according to the experiments not presented in this paper.
\begin{figure}[!t]
\begin{center}
\includegraphics[height= 2.5 in]{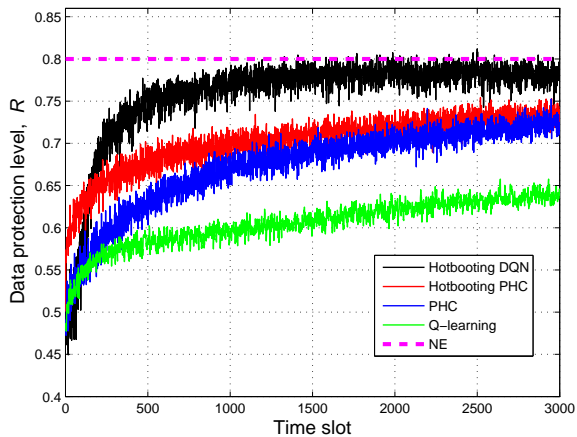} \\
{\footnotesize (a) Data protection level}\\
\includegraphics[height= 2.5 in]{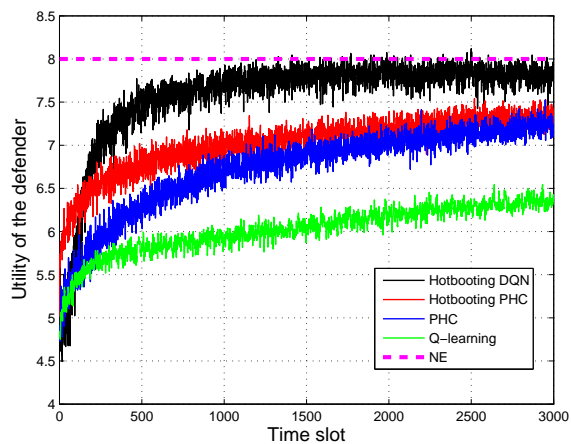} \\
{\footnotesize (b)  Utility of the defender} \\
\caption{APT defense performance of the cloud storage system with 10 storage devices and 10 defense CPUs against an APT attacker with 2 attack CPUs. The size of data stored in each storage device is 1.} \label {LEARN_NE}
\end{center}
\end{figure}

In the first simulation, the defender with $10$ CPUs resisted the attacker with $2$ CPUs over 10 storage devices, each with normalized data size.
As shown in Fig. \ref{LEARN_NE}, the hotbooting DQN-based CPU allocation scheme achieves the optimal policy in a dynamic APT defense game after convergence, which matches the theoretical results of the NE given by Theorem \ref {theorem2}. For example, the data privacy level almost converge to the NE given by (\ref {12}), and the utility of the defender almost converge to the NE given by (\ref {T2U}).
The hotbooting DQN-based CPU allocation scheme outperforms the hotbooting PHC with a faster learning speed, a higher data protection level and a higher utility. The latter in turn exceeds both PHC and Q-learning. For instance, the data protection level of the hotbooting DQN-based scheme is 14.92\% higher than the PHC-based scheme at time slot 1000, which is 30.51\% higher than the Q-learning based scheme. As a result, the hotbooting DQN-based scheme has a 14.92\% higher utility than the PHC-based strategy at time slot 1000, which is 30.51\% higher than that of the Q-learning based strategy. As the hotbooting DQN-based algorithm, an extension of Q learning, compress the learning state space by using CNN to accelerate the learning process and enhance the security performance of the cloud storage system. If the interaction time is long enough, the hotbooting PHC and Q-learning scheme can also converge to the NE of the theoretical results in Theorem \ref {theorem2}.
The PHC-based scheme has less computation complexity than DQN. For example, the PHC-based strategy takes less than 4\% of the time to choose the CPU allocation in a time slot compared with the DQN-based scheme.

\begin{figure}[!t]
\begin{center}
\includegraphics[height= 2.5 in]{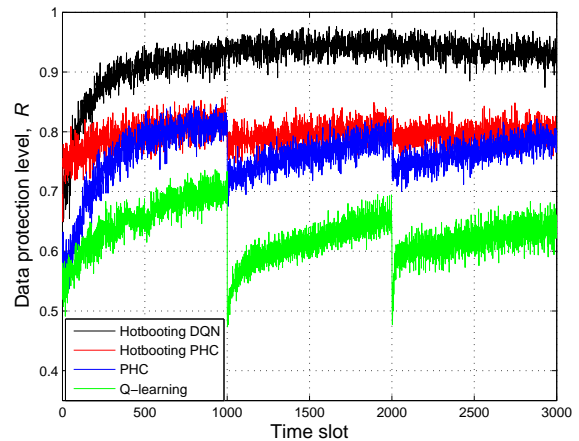} \\
{\footnotesize (a) Data protection level}\\
\includegraphics[height= 2.5 in]{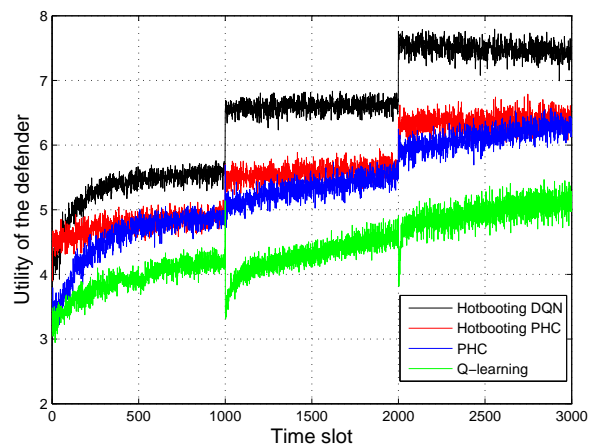} \\
{\footnotesize (b)  Utility of the defender} \\
\caption{APT defense performance of the cloud storage system with 3 storage devices and 16 defense CPUs against an APT attacker with 4 attack CPUs. Both the size of data stored on each device and the attack policy change every 1000 time slots.}\label {BchangeandDirectattack}
\end{center}
\end{figure}

In the second simulation, the size of the data stored in each of the $3$ storage devices of the cloud storage system changed every 1000 time slots. The total data size increases 1.167 times at the 1000-nd time slot and then increases 1.143 times at the 2000-nd time slot. The cloud storage system used $16$ CPUs to scan the storage devices and the APT attacker used $4$ CPUs to attack them. Besides, the attack policy changed every 1000 time slots. The APT attacker estimated the ``optimal'' defense CPU allocation due to the learning algorithm and launched an attack specifically against the estimated defense strategy at time slot 1000 and 2000 to steal data from the cloud storage system. As shown in Fig. \ref {BchangeandDirectattack}, the hotbooting DQN-based CPU allocation is more robust against smart APTs and the time-variant cloud storage system. For example, the data protection level of the hotbooting DQN-based scheme is 30.98\% higher than that of the PHC-based scheme at time slot 1000, which is 97.87\% higher than that of the Q-learning based scheme. As a result, the hotbooting DQN-based scheme has a 30.69\% higher utility than the PHC-based strategy at time slot 1000, which is 96.97\% higher than the Q-learning based strategy.
\begin{figure}[!t]
\begin{center}
\includegraphics[height= 2.5 in]{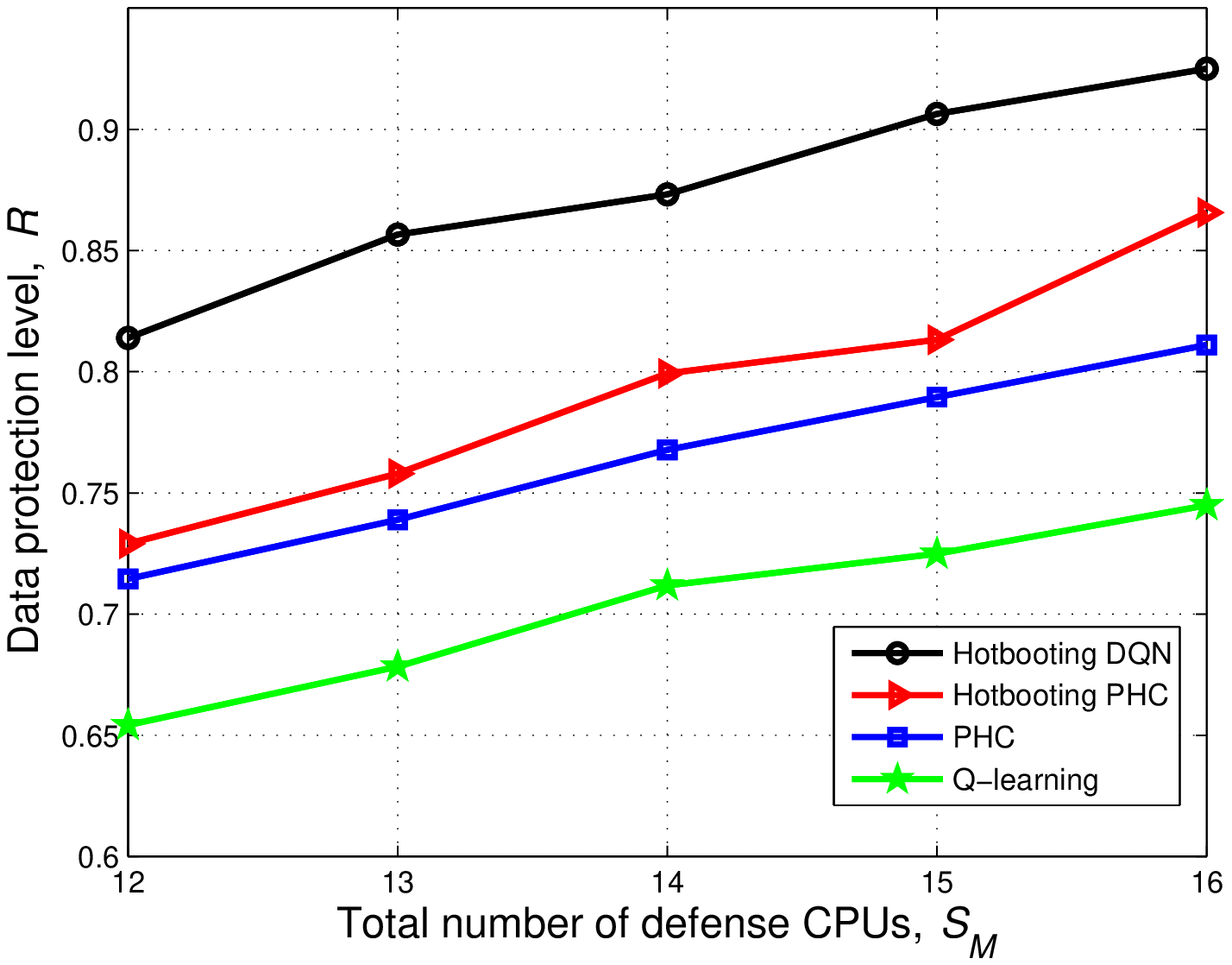} \\
{\footnotesize (a) Data protection level}\\
\includegraphics[height= 2.5 in]{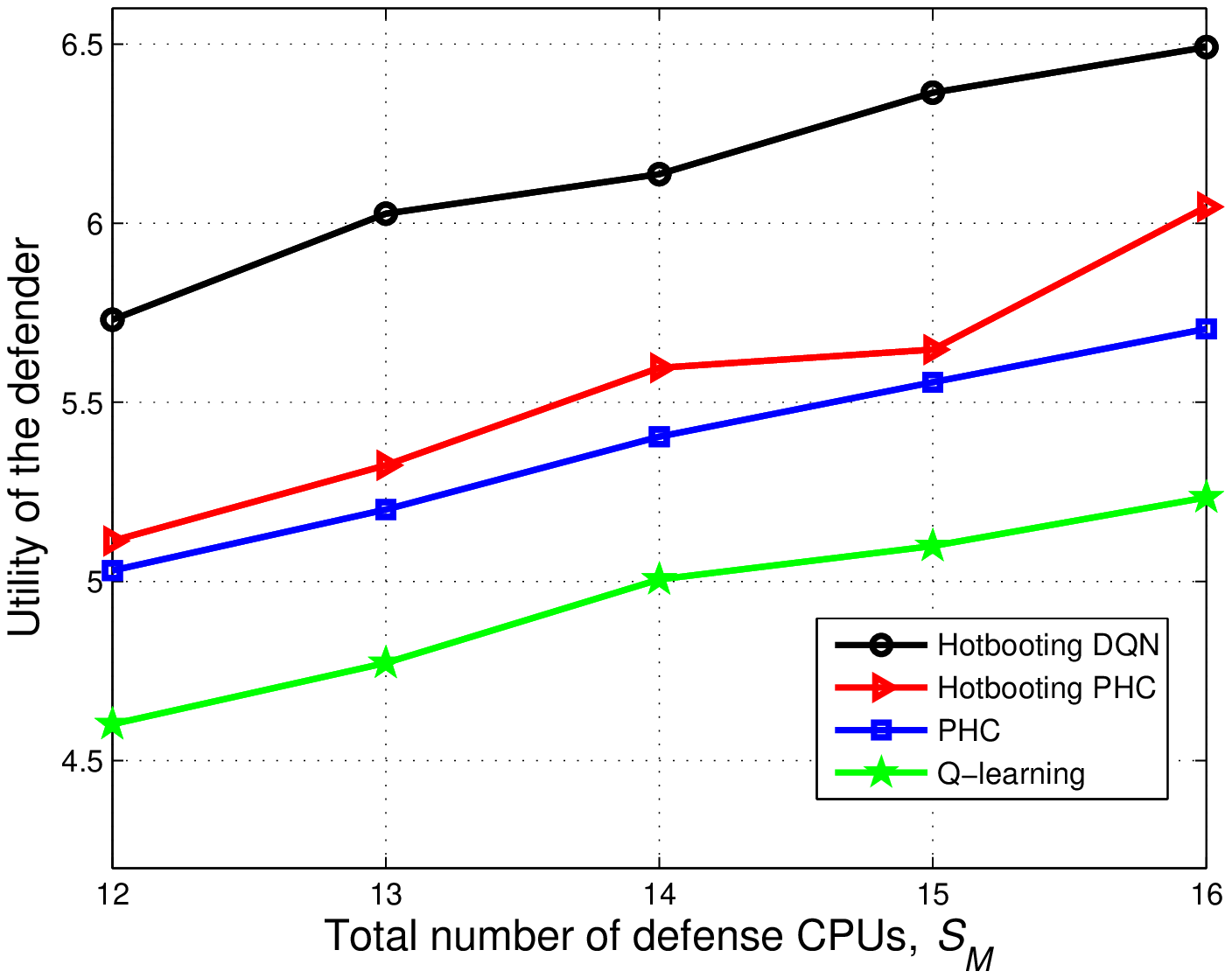} \\
{\footnotesize (b)  Utility of the defender} \\
\caption{APT defense performance of the cloud storage system with $S_M$ defense CPUs, and 3 storage devices that are attacked by 4 attack CPUs, averaged over 3000 time slots. The size of data stored in each storage device changes every 1000 time slots.}\label {AveragewithchangeSM}
\end{center}
\end{figure}

As shown in Fig. \ref {AveragewithchangeSM}, both the data protection level and the utility increase with the number of defense CPUs. For instance, if the number of the defense CPUs changes from 12 to 16, the data protection level and the utility of the defender with hotbooting DQN-based APT defense increase by 14.20\% and 14.03\%, respectively. In the dynamic game with $S_M = 16$, $D = 3$ and $S_N = 4$, the data protection level of the hotbooting DQN-based scheme is 15.85\% higher than that of PHC, which is 21.62\% higher than Q-learning, and the utility of the hotbooting DQN-based CPU allocation scheme is 15.04\% higher than PHC, which is 21.64\% higher than Q-learning.
\begin{figure}[!t]
\begin{center}
\includegraphics[height= 2.5 in]{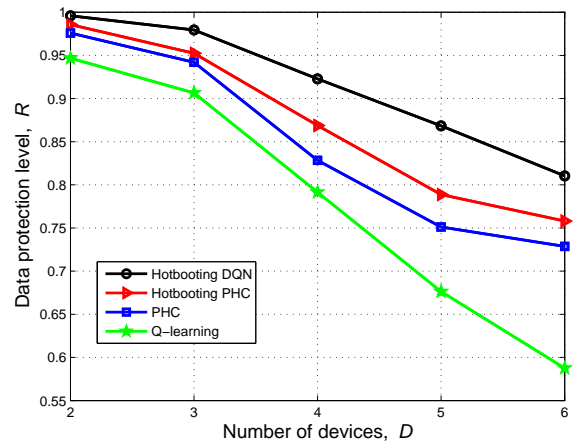} \\
{\footnotesize (a) Data protection level}\\
\includegraphics[height= 2.5 in]{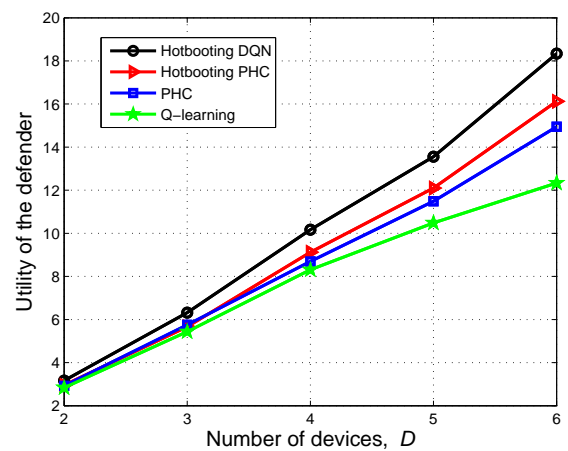} \\
{\footnotesize (b)  Utility of the defender} \\
\caption{APT defense performance of the cloud storage system with $D$ storage devices and 21 defense CPUs against an APT attacker with 4 attack CPUs, averaged over 3000 time slots. The size of data stored in each storage  device changes every 1000 time slots.}\label {AveragewithchangeD}
\end{center}
\end{figure}

As shown in Fig. \ref{AveragewithchangeD}, the APT defense performance slightly decreases with more number of storage devices $D$ in a cloud storage system. However, the hotbooting DQN-based scheme still maintains a high data protection level, i.e., $R=92.50\%$, if 4 storage devices are protected by 21 CPUs and attacked by 4 CPUs. In another example, the hotbooting DQN-based scheme can protect up to 80.05\% of the data stored in 6 storage devices in the cloud. The defender with less number of CPUs has to distribute its resources among all the storage devices to resist APTs, as at least one CPU has to scan each storage device. The performance gain of the hotbooting DQN-based CPU allocation scheme over the hotbooting PHC-based scheme increases with the number of storage devices in the cloud system.

\section{Conclusion}
In this paper, we have formulated a CBG-based CPU allocation game for the APT defense of cloud storage and cyber systems and provided the NEs of the game to show how the number of storage devices, the data sizes in the storage devices and the total number of CPUs impact on the data protection level of the cloud storage system and the defender's utility.
A hotbooting DQN-based CPU allocation strategy has been proposed for the defender to scan the storage devices without being aware of the attack model and the data storage model in the dynamic game. The proposed scheme can improve the data protection level with a faster learning speed and is more robust against smart APT attackers that choose the attack policy based on the estimated defense learning scheme. For instance, the data protection level of the cloud storage system and the utility of the defender increases by 22.29\% and 22.4\%, respectively, compared with the Q-learning based scheme in the cloud storage system with 4 storage devices and 16 defense CPUs against an APT attacker with 4 CPUs. A hotbooting PHC-based CPU allocation scheme can reduce the computation complexity.
\begin{spacing}{1.0}
\bibliography{citeCBG}
\bibliographystyle{ieeetr}
\end{spacing}

\end{document}